\newcommand{\from}{\colon\xspace}
\newcommand{\reals}{\mathbb{R}\xspace}
\newcommand{\naturals}{\mathbb{N}\xspace}
\newcommand{\T}{\mathcal{T}\xspace}
\newcommand{\Tr}{T_\mathrm{root}\xspace}
\renewcommand{\L}{\Lambda^{\phantom{+}}\xspace}
\newcommand{\E}{\Lambda^+\xspace}
\newcommand{\bigO}{\mathcal{O}}
\newcommand{\etal}{\textit{et al.}\xspace}
\newcommand{\emb}{\ensuremath{\mathcal D}\xspace}
\newcommand{\W}{\ensuremath{\mathcal W}\xspace}
\newcommand{\Wt}{\ensuremath{\widetilde{\W}}\xspace}
\renewcommand{\paragraph}[1]{\smallskip\noindent\textbf{#1.}}
\newcommand{\mypar}[1]{{\smallskip\noindent\sffamily\normalsize\bfseries{#1}\hspace{1ex}}}
\let\orgdescriptionlabel\descriptionlabel
\renewcommand*{\descriptionlabel}[1]{%
  \let\orglabel\label
  \let\label\@gobble
  \phantomsection
  \edef\@currentlabel{#1}%
  \let\label\orglabel
  \orgdescriptionlabel{#1}%
}
\title{Polygon-Universal Graphs}
\author{Tim Ophelders}{Department of Mathematics and Computer Science, TU Eindhoven, The Netherlands}{t.a.e.ophelders@tue.nl}{}{}
\author{Ignaz Rutter}{Department of Computer Science and Mathematics, University of Passau, Germany}{rutter@fim.uni-passau.de}{https://orcid.org/0000-0002-3794-4406}{Partially supported by the German Science Foundation (DFG); Ru 1903/3-1.}
\author{Bettina Speckmann}{Department of Mathematics and Computer Science, TU Eindhoven, The Netherlands}{b.speckmann@tue.nl}{https://orcid.org/0000-0002-8514-7858}{Partially supported by the Dutch Research Council (NWO); 639.023.208.}
\author{Kevin Verbeek}{Department of Mathematics and Computer Science, TU Eindhoven, The Netherlands}{k.a.b.verbeek@tue.nl}{}{}
\authorrunning{T. Ophelders and I. Rutter and B. Speckmann and K. Verbeek} 
\keywords{Graph drawing, partial drawing extension, simple polygon} 
\begin{document}
\maketitle

\begin{abstract}
    We study a fundamental question from graph drawing: given a pair $(G,C)$ of a graph $G$ and a cycle~$C$ in~$G$ together with a simple polygon~$P$, is there a straight-line drawing of $G$ inside $P$ which maps $C$ to $P$? We say that such a drawing of $(G,C)$ \emph{respects}~$P$. We fully characterize those instances $(G,C)$ which are \emph{polygon-universal}, that is, they have a drawing that respects $P$ for any simple (not necessarily convex) polygon $P$. Specifically, we identify two necessary conditions for an instance to be polygon-universal. Both conditions are based purely on graph and cycle distances and are easy to check. We show that these two conditions are also sufficient. Furthermore, if an instance $(G,C)$ is planar, that is, if there exists a planar drawing of $G$ with $C$ on the outer face, we show that the same conditions guarantee for every simple polygon $P$ the existence of a planar drawing of $(G,C)$ that respects $P$. If $(G,C)$ is polygon-universal, then our proofs directly imply a linear-time algorithm to construct a drawing that respects a given polygon $P$.
\end{abstract}

\section{Introduction}

Graphs are a convenient way to express relations between entities. To visualize these relations, the corresponding graph needs to be \emph{drawn}, most commonly in the plane and with straight edges. Naturally there are a multitude of different optimization criteria and drawing restrictions that attempt to capture various perceptual requirements or real-world conditions. In this paper we focus on drawings which are constrained to the interiors of simple polygons.

The \emph{polygon-extension problem} asks, whether a given graph admits a (planar) drawing where the outer face is fixed to a given simple polygon~$P$; see Fig.~\ref{fig:respectingExamples} for examples.  Our main focus is the \emph{polygon-universality problem}, which asks whether a given plane graph admits a polygon-extension for every choice of fixing the outer face to a simple polygon.
As is often the case with geometric problems, a natural complexity class for the polygon-extension problem is~$\exists \mathbb R$, the class of problems that can be encoded in polynomial time as an existentially quantified formula of real variables (rather than Boolean variables as for {\sc Sat}), which was introduced by  Schaefer and \v{S}tefankovi\v{c}~\cite{ss-fpnee-15}.  The natural complexity class for the polygon-universality problem is~$\forall\exists \mathbb R$, the universal existential theory of the reals, which has been recently defined by Dobbins~\etal~\cite{dkmr-aerca-18}. It is known that $\mathrm{NP} \subseteq \exists \mathbb R \subseteq \forall\exists\mathbb R \subseteq \mathrm{PSPACE}$~\cite{c-agcp-88}.

Tutte~\cite{t-hdg-63} proved that there is a straight-line planar drawing of a planar graph $G$ inside an arbitrary convex polygon $P$ if one fixes the outer face of (an arbitrary planar embedding of) $G$ to $P$.
This result has been generalized to allow polygons $P$ that are non-strictly convex~\cite{cegl-dgppo-12,dgk-pdhgg-11} or even star-shaped polygons~\cite{hm-cdgnb-08}. 
These results have applications in \emph{partial drawing extension} problems. Here, in addition to an input graph $G$, we are given a subgraph~$H \subseteq G$ together with a fixed drawing~$\Gamma$ of $H$. The question is whether one can extend the given drawing $\Gamma$ to a planar straight-line drawing of the whole graph $G$ by drawing the vertices and edges of $G-H$ inside the faces of $H$. If the embedding of $G$ is fixed, the results by Tutte and others allow to reduce the problem by removing vertices of $G$ that are contained in convex or star-shaped faces of $\Gamma$. Such reduction rules have lead to efficient testing algorithms for special cases, for example, when the drawing of $H$ is convex~\cite{mnr-ecpdg-16} (note that this is non-trivial, as the outer face is not convex).

Recently, Lubiw~\etal~\cite{lmm-cdgpr-18} showed that it is $\exists \mathbb{R}$-complete to decide for a given planar graph that is partially fixed to a non-crossing polygon \emph{with holes}, whether the partial drawing can be extended to a planar straight-line drawing that does not intersect the outside of the polygon.  That is, the planar polygon extension problem is $\exists \mathbb R$-complete for polygons with holes.  They leave the case of simple polygons open. 

If we do not insist on straight-line drawings, then other questions arise. Angelini~\etal~\cite{aklss-obdog-20} give an $O(mn)$-time algorithm for testing whether an $n$-vertex outer-planar graph admits a planar one-bend drawing whose outer face is fixed to a simple polygon on $m$ vertices.
Mchedlidze and Urhausen~\cite{mu-bsedc-18} link the number of bends per edge that are necessary for extending a drawing to a convexity measure for the faces of the partial drawing (note that it might be impossible to extend a drawing in a planar way even with arbitrarily many bends per edge). Angelini~\etal~\cite{adfjk-tppeg-15} present a linear time algorithm to decide if a partially fixed drawing has a planar drawing using Jordan arcs, while Jel{\'i}nek~\etal~\cite{jkr-ktppe-13} characterize the solvable instances by forbidden substructures.

Quite recently, Dobbins~\etal~\cite{dkmr-aerca-18} considered the problem of area-universality for graphs with partial drawings. Let $G$ be a planar graph with a fixed embedding, including the outer face.  An assignment of areas to faces of $G$ is \emph{realizable} if $G$ admits a straight-line drawing such that each face has the assigned area.  A graph is \emph{area-universal} if every area assignment is realizable.  Dobbins~\etal prove that it is $\exists \mathbb{R}$-complete to decide whether an area assignment is realizable for a planar triangulation, which has been partially drawn, and testing area-universality is $\forall\exists \mathbb{R}$-complete if the planarity condition is dropped (but still parts of the drawing are fixed). They conjecture that the same area-universality problems without a partially fixed drawing are $\exists \mathbb{R}$- and $\forall \exists \mathbb{R}$-complete in general. 

\subparagraph{Notation}
Let~$G=(V,E)$ be a graph with $n$ vertices. A \emph{drawing} $\emb$ of $G$ is a map from each $v \in V$ to points in the plane and from each edge $e \in E$ to a Jordan arc connecting its endpoints. A \emph{straight-line} drawing maps each edge to a straight line segment. A drawing is \emph{planar}, if no two edges intersect, except at common endpoints. A graph $G$ is planar if it has a planar drawing. Let~$C=[c_1,\dots,c_t]$ with~$c_i\in V$ be a simple cycle in~$G$. An \emph{instance}~$(G,C)$ is planar if $G$ has a planar drawing with $C$ as the outer face.
Let $P$ be a simple polygon with~$t$ vertices~$[p_1,\dots,p_t]$ with~$p_i\in\reals^2$. A drawing $\emb$ of~$(G,C)$ \emph{respects} $P$ if it is a map~$\emb\from V\to P$ from vertices to points in~$P$ such that~$\emb(c_i)=p_i$ and for each edge~$\{u,v\}\in E$, the line segment between~$\emb(u)$ and~$\emb(v)$ lies in~$P$ (see Figure~\ref{fig:respectingExamples}). That is,~$\emb$ is a straight-line drawing of~$G$ inside~$P$ that fixes the vertices of~$C$ to the corresponding vertices of~$P$. An instance~$(G,C)$ is \emph{(planar) polygon-universal} if it admits a (planar) straight-line drawing that respects every simple (not necessarily convex) polygon~$P$ on~$t$ vertices. 

\begin{figure}[t]\centering
    \includegraphics{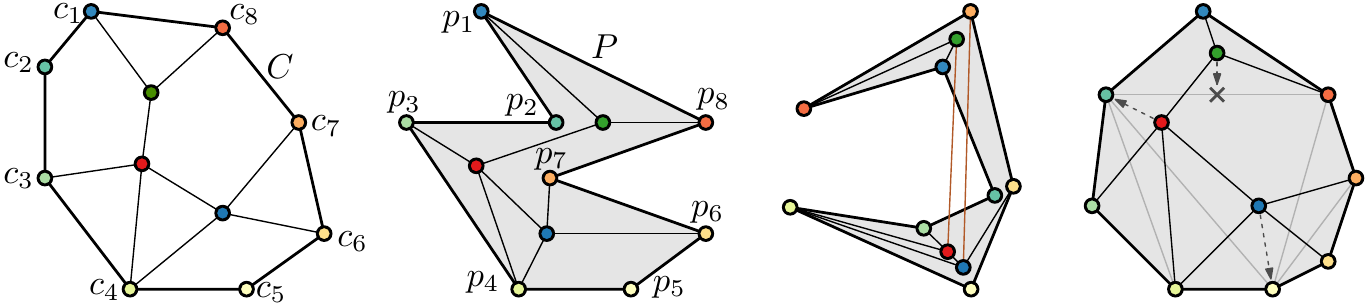}
    \caption{
    Left: an instance $(G,C)$.
    Center left: a drawing of $(G,C)$ that respects a polygon $P$ (shaded in grey).
    Center right: there is no drawing of $(G,C)$ that respects this polygon.
    Right: A triangulated convex polygon with a drawing that is not triangulation-respecting; moving the vertices along the dashed arrows results in a triangulation-respecting drawing.}
    \label{fig:respectingExamples}
\end{figure}

Our algorithms use a triangulation $\T$ of $P$ to construct a drawing or prove the non-universality. We say that a drawing of $(G,C)$ \emph{respects} $\T$ if no edge of $G$ properly crosses an edge of $\T$.
Although every triangulation-respecting drawing is also a drawing, the converse is not true. In fact, there are graphs $G$, cycles $C$, and polygons $P$ that have a drawing, but no triangulation of $P$ exists that allows a triangulation-respecting drawing (see Figure~\ref{fig:nonrespecting}).

\begin{figure}[h]\centering
  \includegraphics{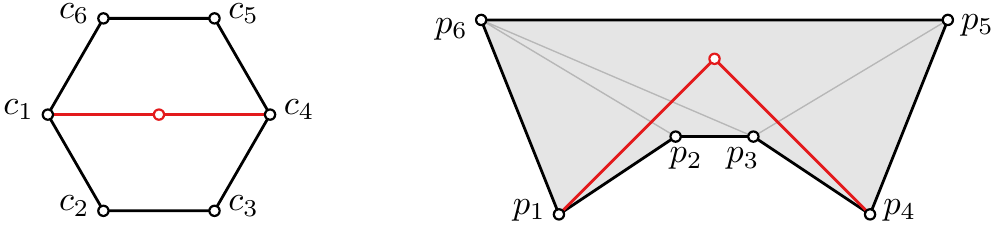}
  \caption{A graph $G$ and a polygon for which there exists a drawing of $G$, but no triangulation with a triangulation-respecting drawing.}
  \label{fig:nonrespecting}
\end{figure}

\subparagraph{Results and organization} In Section~\ref{sec:necessary} we identify two necessary conditions for an instance~$(G,C)$ to be \emph{polygon-universal}. These conditions are purely based on graph and cycle distances and hence easy to check. To show that these two conditions are also sufficient, we use triangulation-respecting drawings: if there is a triangulation $\T$ of a simple polygon $P$ such that $(G,C)$ does not admit a triangulation-respecting drawing for $\T$, then we can argue that $G$ contains one of two forbidden substructures, violating the necessary conditions (see Section~\ref{sec:universality}). These substructures certify that $(G,C)$ is not polygon-universal. 

To arrive at this conclusion, in Section~\ref{sec:triangulation} we first present an algorithm that tests in linear time for a given instance $(G,C)$, a polygon $P$, and triangulation $\T$ of~$P$, whether there exists a triangulation-respecting drawing of $G$ for $\T$ inside $P$. If so, we can construct the drawing in linear time. Then, in Section~\ref{sec:planar}, we consider planar instances $(G,C)$ and show that the same algorithm can decide in linear time whether there is a triangulation-respecting drawing that is planar after infinitesimal perturbation. An analysis of this algorithm shows that a planar instance $(G,C)$ is planar polygon-universal if and only if it is polygon-universal.

Omitted proofs, marked with ($\star$), can be found in the appendix.

\section{Necessary conditions for polygon-universality}\label{sec:necessary}

We present two necessary conditions for an instance $(G,C)$ to be polygon-universal. Intuitively, both conditions capture the fact that there need to be ``enough'' vertices in $G$ between cycle vertices for the drawing not to become ``too tight''. The \emph{\ref{cond:banana} Condition} captures this fact for any two vertices on the cycle $C$. The \emph{\ref{cond:ninja} Condition} is a bit more involved: even if the Pair Condition is satisfied for any pair of vertices on the cycle, there can still be triples of vertices which together ``pull too much'' on the graph. Specifically, for an instance $(G,C)$ of a graph $G$ and a cycle $C\subset G$ with $t$ vertices, we denote by~$d_G\from V\times V\to\naturals$ the graph distance in $G$ and by~$d_C\from V(C)\times V(C)\to\naturals$ the distance (number of edges) along the cycle~$C$.
The following conditions are necessary for $(G,C)$ to be polygon-universal for all simple polygons~$P$: 
  
\begin{description}
  \item[Pair\label{cond:banana}] For all~$i$ and~$j$, we have~$d_C(c_i,c_j)\leq d_G(c_i,c_j)$ (and hence $d_C(c_i,c_j)=d_G(c_i,c_j)$).
  \item[Triple\label{cond:ninja}] For all vertices~$v\in V$ and distinct~$i,j,k$ with~$d_C(c_i,c_j)+d_C(c_j,c_k)+d_C(c_i,c_k)\geq t$ (and hence~$=t$), 
  we have~$d_G(c_i,v)+d_G(c_j,v)+d_G(c_k,v)>t/2$.
\end{description}

\noindent
To establish that these two conditions are necessary, we use the \emph{link distance} between two points inside certain simple polygons $P$. 
Specifically, the link distance of two points $q_1$ and $q_2$ with respect to a simple polygon $P$ is the minimum number of segments for a polyline $\pi$ that lies inside $P$ and connects $q_1$ and $q_2$.
If the \ref{cond:banana} Condition is violated for two cycle vertices $c_i$ and $c_j$, we can construct a \emph{Pair Spiral} polygon $P$ (see Figure~\ref{fig:necessary} (left)) such that the link distance between $p_i$ and $p_j$ (the vertices of $P$ to which $c_i$ and $c_j$ are mapped) exceeds~$d_G(c_i,c_j)$. Clearly there is no drawing $(G,C)$ that respects $P$.

If the first condition holds, but the second condition is violated by a vertex $v$, consider the shortest paths via $v$ that connect $c_i,c_j,c_k$ to each other.  By assumption the total length of these three paths is $2d_G(c_i,v)+2d_G(c_j,v)+2d_G(c_k,v) \le t$, while the total length of the paths connecting $c_i,c_j,c_k$ to each other along $C$ is $d_C(c_i,c_j)+d_C(c_j,c_k)+d_C(c_i,c_k) = t$.  Since the pair condition holds, the paths via $v$ are not shorter than the paths along $C$, and therefore the paths via $v$ must be shortest paths connecting the pairs.
That is,~$d_C(c_i,c_j)=d_G(c_i,v)+d_G(c_j,v)$, $d_C(c_j,c_k)=d_G(c_j,v)+d_G(c_k,v)$ and~$d_C(c_i,c_k)=d_G(c_i,v)+d_G(c_k,v)$.
In that case, we can construct a \emph{Triple Spiral} polygon $P$ (see Figure~\ref{fig:necessary} (right)) such that there is no point that lies within link-distance~$d_G(c_i,v)$ from~$c_i$, link-distance~$d_G(c_j,v)$ from~$c_j$, and link-distance~$d_G(c_k,v)$ from~$c_k$ simultaneously.
Hence, there exists no drawing of the aforementioned shortest paths via $v$ that respects $P$.

\begin{figure}[h]\centering
    \includegraphics{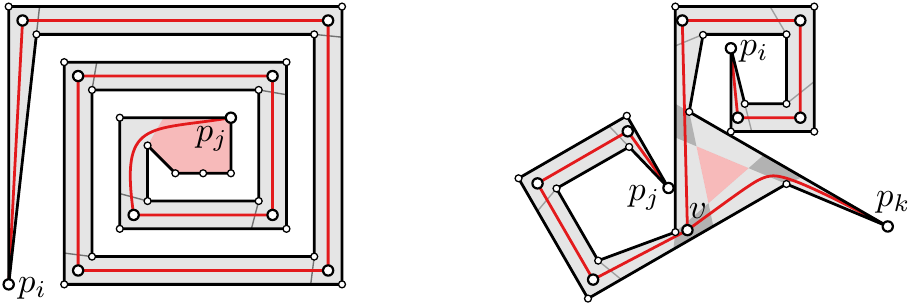}
    \caption{Left: Pair Spiral. Points with link-distance greater than~$d_G(c_i,c_j)$ from~$p_i$ shaded red. Right: Triple Spiral. Points of link-distance $\leq d_G(c_x,v)$ from~$p_x$ for one~$x\in\{i,j,k\}$ in light gray; for two $x\in\{i,j,k\}$ in dark gray; there is no point $q$ in $P$ with $d_G(c_x,q)\leq d_G(c_x,v)$ for all $x\in\{i,j,k\}$.}
    \label{fig:necessary}
\end{figure}

\section{Triangulation-respecting drawings}\label{sec:triangulation}

In this section we are given the following input: an instance $(G,C)$ consisting of a graph $G$ with $n$ vertices and a cycle $C$ with $t$ vertices, and a simple polygon $P$ with $t$ vertices together with an arbitrary triangulation $\T$ of $P$. We study the following question: is there a drawing of $(G,C)$ that respects both $P$ and $\T$? 

We describe a dynamic programming algorithm which can answer this question in linear time. The basic idea is as follows: every edge of $\T$ defines a \emph{pocket} of $P$. We recursively \emph{sketch} a drawing of $G$ within each pocket. Such a sketch assigns an approximate location, such as an edge or a triangle, to each vertex. Ultimately we combine the location constraints on vertex positions posed by the sketches and decide if they can be satisfied.

We root (the dual tree of) $\T$ at an arbitrary triangle $\Tr$. Each edge~$e$ of $\T$ partitions~$P$ into two regions, one of which contains~$\Tr$.
Let~$Q$ be the region not containing~$\Tr$. We say that $Q$ is a \emph{pocket} with the \emph{lid} $e=e_Q$, and we denote the unique triangle outside $Q$ adjacent to $e_Q$ by $T_Q^+$.
Since a pocket is uniquely defined by its lid, we will for an edge $e$ also write $Q_e$ to denote the pocket with lid $e$.
We say that a pocket is \emph{trivial} if its lid lies on the boundary of $P$; in such case the pocket consists of only that edge.
If $Q$ is a non-trivial pocket, then we denote the unique triangle inside $Q$ adjacent to $e_Q$ by $T_Q$ (see Figure~\ref{fig:triangulation}).

For ease of explanation we consider all indices on $C$ and $P$ modulo $t$, that is, we identify $c_i$ and $c_{i+t}$ as well as $p_i$ and $p_{i+t}$.
Moreover, when talking about a non-trivial pocket $Q$ with lid $(p_i,p_j)$, whose third vertex of $T_Q$ is $p_k$, we will assume that $i\leq k\leq j$ (otherwise simply shift the indices cyclically).
We first define triangulation-respecting drawings for pockets: 
\begin{definition}
    A \emph{triangulation-respecting drawing} for a pocket $Q$ with lid $e_Q=(p_i,p_j)$ is an assignment of the vertices of $G$ to locations inside the polygon $P$, such that
    \begin{enumerate}
        \item Any vertex $c_\ell$ with $i \leq \ell \leq j$ is assigned to the polygon vertex $p_\ell$.
        \item For any edge $(u,v)$ of $G$, $u$ and $v$ lie on a common triangle (or edges or vertices thereof).
    \end{enumerate}
    We consider the triangles of the triangulation as closed, so that distinct triangles may share a segment (namely an edge of the triangulation) or a point (namely a vertex of $P$).
    We define a triangulation-respecting drawing for the entire triangulation analogously, requiring that~$c_\ell$ is assigned to~$p_\ell$ for all~$\ell$.
\end{definition}

\begin{figure}[b]\centering
  \includegraphics{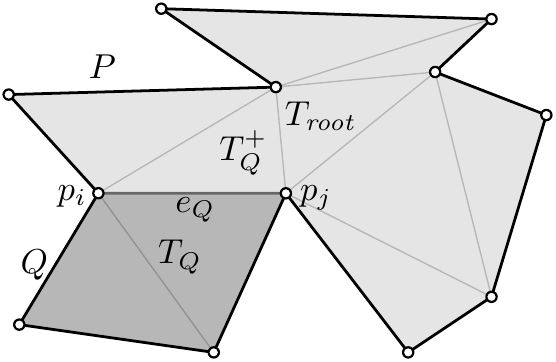}
  \caption{A triangulation with labels for the pocket $Q$ (shaded dark) and triangles $T_Q$ and $T_Q^+$ incident to edge~$e_Q$ of the triangulation.}
  \label{fig:triangulation}
\end{figure}

A \emph{sketch} is an assignment of the vertices of $G$ to simplices (vertices, edges, or triangles) of the triangulation with the property that, if we draw each vertex anywhere on its assigned simplex, then the result is a triangulation-respecting drawing.
We hence interpret a simplex as a closed region of the plane in the remainder of this paper.

\begin{definition}
A \emph{sketch} of the triangulation is a function $\Gamma$ that assigns vertices of $G$ to simplices of $\T$, such that $(i)$ for any vertex $c_i$ of the cycle, $\Gamma(c_i)=p_i$, and $(ii)$ for any two adjacent vertices $u$ and $v$, there exists a triangle of $\T$ that contains both $\Gamma(u)$ and $\Gamma(v)$.
A sketch of a pocket is defined similarly, except that vertices $p_i$ of the polygon that lie outside the pocket do not need $c_i$ assigned to them.
\end{definition}

We show that a sketch exists (for a pocket or a triangulation) if and only if there is a triangulation-respecting drawing (for that pocket or triangulation).
If a pocket admits a sketch, we call a pocket \emph{sketchable}.
If a particular pocket is sketchable, then so are all of its subpockets, since any sketch for a pocket is also a sketch for any of its subpockets.

We present an algorithm that for any sketchable pocket constructs a sketch, and for any other pocket reports that it is not sketchable.
This algorithm recursively constructs particularly well-behaved sketches for child pockets, and combines these sketches into a new well-behaved sketch.
To obtain a sketch for $\T$, we combine the three well-behaved sketches for the three pockets whose lids are the edges of the root triangle $\Tr$ -- assuming that all three pockets are sketchable.

\subparagraph{Well-behaved sketches}
We restrict our attention to \emph{local} sketches for a pocket $Q$, which assign vertices either to simplices in $Q$ or to the triangle $T_Q^+$ just outside $Q$, and \emph{interior} local sketches, which assign vertices to simplices in $Q$ only.
\begin{restatable}[$\star$]{lemma}{lemSketchSimpleOutside}
    If there is a sketch for pocket~$Q$, then there is a local sketch for~$Q$.
    \label{lem:contractedSketch}
\end{restatable}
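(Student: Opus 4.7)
The plan is to construct the desired local sketch $\Gamma'$ directly from the given sketch $\Gamma$ by contracting every vertex whose image leaves $Q\cup T_Q^+$ onto the triangle $T_Q^+$ itself. Concretely, I would set $\Gamma'(v)=\Gamma(v)$ whenever $\Gamma(v)\subseteq Q\cup T_Q^+$, and $\Gamma'(v)=T_Q^+$ otherwise. Condition~(i) of the sketch definition is then immediate: any cycle vertex $c_i$ with $p_i$ on or inside $Q$ is mapped by $\Gamma$ to $p_i\in Q$ and is therefore not touched, while cycle vertices $c_i$ with $p_i$ outside the pocket do not need to be pinned to $p_i$ under the pocket version of the definition.

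The main work will be verifying condition~(ii): for every edge $\{u,v\}$ of $G$, $\Gamma'(u)$ and $\Gamma'(v)$ should still lie inside a common triangle of $\T$. I would proceed by a case split on how many of $\Gamma(u),\Gamma(v)$ lie outside $Q\cup T_Q^+$. If both do, both are reassigned to $T_Q^+$ and trivially share that triangle. If neither does, the common triangle provided by $\Gamma$ already witnesses adjacency, using the observation that $e_Q$ is an edge of $\T$ and therefore cannot be crossed: any triangle meeting $Q\setminus e_Q$ lies in $Q$, and any triangle meeting $T_Q^+\setminus e_Q$ is $T_Q^+$ itself.

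The interesting case is when exactly one of the two endpoints, say $v$, is reassigned. Then the original common triangle $T'$ of $u$ and $v$ contains a point of $\Gamma(v)$ outside $Q\cup T_Q^+$, so $T'$ lies entirely outside $Q$; yet $T'$ must also contain $\Gamma(u)\subseteq Q$. This forces $\Gamma(u)$ to lie on the shared boundary of $Q$ and its complement, i.e., on $e_Q$ or at one of its endpoints, and in either case $\Gamma(u)\subseteq T_Q^+$. The triangle $T_Q^+$ then contains both $\Gamma'(u)=\Gamma(u)$ and $\Gamma'(v)=T_Q^+$, as required. The main obstacle throughout is keeping this boundary-crossing case analysis tight; once one establishes that $e_Q$ cannot be crossed by any triangle of $\T$, everything funnels naturally through $T_Q^+$, so contraction onto $T_Q^+$ is in fact the unique natural choice.
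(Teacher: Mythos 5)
Your construction differs from the paper's in one small but consequential detail: you reassign a vertex $v$ to $T_Q^+$ only when $\Gamma(v)\not\subseteq Q\cup T_Q^+$, whereas the paper's proof reassigns whenever $\Gamma(v)\not\subseteq Q$. The distinction matters for two reasons. First, a local sketch is required to assign each vertex either to a simplex contained in $Q$ or to the triangle $T_Q^+$ itself; with your threshold, a vertex whose $\Gamma$-image is, say, an edge of $T_Q^+$ other than $e_Q$, or the apex of $T_Q^+$, is left untouched, and the resulting $\Gamma'$ is then \emph{not} a local sketch even though your adjacency check passes. Second, your ``exactly one endpoint reassigned'' case tacitly assumes $\Gamma(u)\subseteq Q$ for the untouched endpoint $u$; but under your threshold the hypothesis only gives $\Gamma(u)\subseteq Q\cup T_Q^+$, so the step ``this forces $\Gamma(u)$ to lie on $e_Q$'' does not follow as stated.

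The repair is exactly what the paper does: define $\Gamma^*(v)=T_Q^+$ iff $\Gamma(v)\not\subseteq Q$, and $\Gamma^*(v)=\Gamma(v)$ otherwise. Then $\Gamma^*$ is by construction a local assignment, condition (i) is unchanged, and in the mixed case one has $\Gamma(u)\subseteq Q$ by definition and a common triangle $T'$ that is not contained in $Q$ (since $T'\supseteq\Gamma(v)\not\subseteq Q$); since triangles of $\T$ never cross $e_Q$, $T'\cap Q\subseteq e_Q$, hence $\Gamma(u)\subseteq e_Q\subseteq T_Q^+$, and $T_Q^+$ witnesses the edge $\{u,v\}$. Once you make this correction your argument coincides with the paper's; the structure of the case split and the use of the non-crossing property of $e_Q$ are the same.
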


Generally speaking, it is advantageous for a sketch to place its vertices as far ``to the outside'' as possible, to generate maximum flexibility when combining sketches.
To capture this intuition, we introduce a preorder $\preceq_Q$ on local sketches of a pocket $Q$, defined as $\Gamma\preceq_Q\Gamma'$ iff $\Gamma(v)\cap T_Q^+\subseteq\Gamma'(v)$ for all vertices $v$. 
Intuitively, maximal elements with respect to this preorder maximize for each vertex, the intersection of its assigned simplex with $T_Q^+$.
We call a local sketch $\Gamma$ of $Q$ \emph{well-behaved} if it is maximal with respect to $\preceq_Q$, and \emph{interior well-behaved} if it is maximal among all interior local sketches of $Q$.
A similar preorder and notion of well-behaved can be defined for sketches of the entire triangulation, by replacing~$T_Q^+$ by $\Tr$ in the definition.

\subparagraph{The construction}
We show in Lemma~\ref{lem:greedyPocket} that for any sketchable pocket $Q$, we can construct a specific interior well-behaved sketch $\L_Q$, and a specific well-behaved sketch $\E_Q$.
Before we can present the proof, we first need to define $\L_Q$ and $\E_Q$.

If $Q$ is a trivial pocket, that is, it consists of a single edge~$e_Q$ of $P$, we define
\[\L_Q(v)=\left\{\begin{array}{ll}
  p_i & \text{if $v=c_i$,}\\
  p_j & \text{if $v=c_j$,}\\
  e_Q & \text{otherwise.}\\
\end{array}\right.\]
For non-trivial pockets $Q$ (that do not consist of a single edge), we will define $\L_Q$ differently.
This definition will rely on the definitions of $\E_L$ and $\E_R$ for the child pockets $L$ and $R$ of $Q$ (whose outer triangles $T_L^+$ and $T_R^+$ equal the inner triangle $T_Q$ of $Q$).
Therefore, we postpone the definition of $\L_Q$ for non-trivial pockets until after the definition of $\E_Q$.
Intuitively, $\E_Q$ pushes those vertices which can be placed anywhere on~$e_Q$ out to~$T_Q^+$ if their neighbors allow this and it pushes all remaining vertices as ``far out as possible''.
Formally, $\E_Q$ is defined in terms of $\L_Q$ and will hence be defined if and only if $\L_Q$ is defined:
\[\E_Q(v)=\left\{\begin{array}{ll}
  T_Q^+                & \text{if $e_Q\subseteq\L_Q(v)$ and 
                       $\forall_{(u,v)\in E}\L_Q(u)\cap e_Q\neq\emptyset$,}\\
  \L_Q(v)\cap e_Q & \text{otherwise, if $\L_Q(v)\cap e_Q\neq\emptyset$,}\\
  \L_Q(v)         & \text{otherwise.}\\
\end{array}\right.\]

\noindent
It remains to define $\L_Q$ for non-trivial pockets.
We will define $\L_Q$ only if $\E_L$ and $\E_R$ are defined for both of its child pockets $L$ and $R$.
We attempt to combine $\E_L$ and $\E_R$ into a sketch~$\L_Q$ by taking the more restrictive placement for each vertex; here an assignment to~$T_L^+$ or~$T_R^+$ is interpreted as ``no placement restriction'' (see Figure~\ref{fig:mergeLR}).
Potentially, $\E_L$ and~$\E_R$ restrict the location of a vertex $v$ in such a way that there is no valid placement for $v$.
In such cases, the following definition assigns that vertex to an ``undefined'' location.

\begin{figure}[t]\centering
  \includegraphics{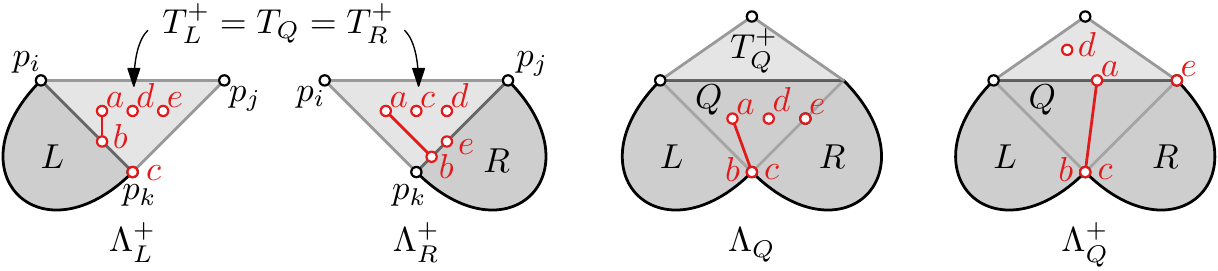}
  \caption{$\E_L$ and~$\E_R$ are merged into~$\L_Q$ which is then transformed into~$\E_Q$ for a subgraph of~$G$. Vertex $c$ is constrained to $p_k$ in $\E_L$ and can lie anywhere in $T_R^+$ in $\E_R$, hence $c$ is constrained to $p_k$ in both $\L_Q$ and $\E_Q$. Vertex $a$ can lie anywhere in $T_L^+ = T_R^+ = T_Q$ in $\L_Q$; since $a$ is connected to $b$ it is pushed to the edge $(p_i,p_j)$ in $\E_Q$ (and not further). Vertex $d$ can also lie anywhere in $T_L^+ = T_R^+ = T_Q$ in $\L_Q$; since it has no further restrictions it is pushed all the way to $T_Q^+$ in~$\E_Q$.}
  \label{fig:mergeLR}
\end{figure}

\[\L_Q(v)=\left\{\begin{array}{ll}
  \E_L(v)\cap\E_R(v) & \text{if $\E_L(v)\cap\E_R(v)\neq\emptyset$,}\\
  \E_L(v)            & \text{otherwise, if $T_Q=\E_R(v)$,}\\
  \E_R(v)            & \text{otherwise, if $T_Q=\E_L(v)$,}\\
  \text{undefined}   & \text{otherwise.}
\end{array}\right.\]

\noindent
If the above equation assigns any vertex to an ``undefined'' location, we say that $\L_Q$ is \emph{undefined}.
In summary, $\L_Q$ is defined if and only if $\E_L$ is defined, $\E_R$ is defined, and the above equation does not assign any vertex to an undefined location.
We inductively show that $\L_Q$ and $\E_Q$ are defined if and only if the pocket $Q$ is sketchable.
Moreover, if they are defined, then $\L_Q$ and $\E_Q$ are interior well-behaved and well-behaved sketches, respectively.
Lemma~\ref{lem:sketchIfDefined} shows that if both $\L_Q$ and $\E_Q$ are defined, then they are sketches. Lemma~\ref{lem:greedyPocket} shows that if pocket $Q$ is sketchable, then $\L_Q$ and $\E_Q$ are both defined and (interior) well-behaved.

We try to construct a sketch $\Delta$ for the root triangle $\Tr$, which (similar to $\Lambda_Q$ for a non-trivial pocket $Q$) combines well-behaved sketches for its child pockets.
Where a non-trivial pocket $Q$ has two child pockets, $\Tr$ has three child pockets $A$, $B$, and $C$ (with $T_A^+=T_B^+=T_C^+=\Tr$).
The equation for $\Delta$ is analogous to that of $\L_Q$; we say that $\Delta$ is defined if and only if all of $\E_A$, $\E_B$, and $\E_C$ are defined, and the following equation does not assign any vertex to an ``undefined'' location.
  \[\Delta(v)=\left\{\begin{array}{ll}
    \E_A(v)\cap\E_B(v)\cap\E_C(v)
             & \text{if $\E_A(v)\cap\E_B(v)\cap\E_C(v)\neq\emptyset$,}\\
    \E_A(v)  & \text{otherwise, if $\E_B(v)=\E_C(v)=\Tr$,}\\
    \E_B(v)  & \text{otherwise, if $\E_A(v)=\E_C(v)=\Tr$,}\\
    \E_C(v)  & \text{otherwise, if $\E_A(v)=\E_B(v)=\Tr$,}\\
      \text{undefined} & \text{otherwise.}
  \end{array}\right.\]
Lemma~\ref{lem:greedyTriangulation} shows that the triangulation $T$ does not admit a sketch if $\Delta$ is undefined.
Otherwise, Lemma~\ref{lem:sketchIfDefined} shows that $\Delta$ is a sketch for the triangulation.

  \begin{lemma}
    The functions~$\L_Q$,~$\E_Q$ and~$\Delta$ are sketches whenever they are defined.\label{lem:sketchIfDefined}
  \end{lemma}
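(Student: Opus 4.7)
The plan is to prove the lemma by induction on the size of the pocket, treating $\L_Q$ and $\E_Q$ simultaneously and obtaining the statement for $\Delta$ by the same argument applied at the root triangle $\Tr$ with three children instead of two. The base case is a trivial pocket $Q$ with lid $e_Q = (p_i, p_j)$: both sketch conditions for $\L_Q$ follow directly from its definition, since $p_i, p_j \in e_Q$ and every other vertex is assigned to $e_Q \subseteq T_Q$; the postprocessing defining $\E_Q$ merely promotes every non-cycle vertex to $T_Q^+$, and the conditions carry over immediately.

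For the inductive step with a non-trivial pocket $Q$ and children $L$, $R$, I assume by the IH that $\E_L$ and $\E_R$ are sketches. To verify condition~(i) for $\L_Q$, I pick a cycle vertex $c_\ell$ with $i \le \ell \le j$; since $\ell$ lies in $L$'s range $[i,k]$ or $R$'s range $[k,j]$ (or both, when $\ell = k$), the IH gives $\E_L(c_\ell) = p_\ell$ or $\E_R(c_\ell) = p_\ell$. Inspecting the four cases of the $\L_Q$ definition, and using that $\L_Q(c_\ell)$ is not the ``undefined'' value by assumption, forces $\L_Q(c_\ell) = p_\ell$. For condition~(ii), I fix an edge $(u,v) \in E$ and use the IH to obtain triangles $T_L^\ast$ and $T_R^\ast$ of $\T$ containing $\{\E_L(u), \E_L(v)\}$ and $\{\E_R(u), \E_R(v)\}$ respectively. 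If $T_L^\ast \ne T_Q$, then $\E_L(w) \ne T_Q$ for $w \in \{u,v\}$, which excludes the third case of the $\L_Q$ definition; the remaining cases give $\L_Q(w) \subseteq \E_L(w) \subseteq T_L^\ast$, so $T_L^\ast$ is a common triangle. The case $T_R^\ast \ne T_Q$ is symmetric, and when $T_L^\ast = T_R^\ast = T_Q$, every defining case of $\L_Q$ yields $\L_Q(w) \subseteq T_Q$, so $T_Q$ itself is a common triangle.

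Next I deduce that $\E_Q$ is a sketch given that $\L_Q$ is one. Condition~(i) follows directly: for a cycle vertex $c_\ell$ we have $\L_Q(c_\ell) = \{p_\ell\}$, so $e_Q \not\subseteq \L_Q(c_\ell)$ rules out the promotion case, and the remaining two cases leave $\E_Q(c_\ell) = p_\ell$. For condition~(ii) on an edge $(u,v)$, let $T^\ast$ be a common triangle for $\{\L_Q(u), \L_Q(v)\}$. If neither endpoint is promoted by $\E_Q$, then $\E_Q(w) \subseteq \L_Q(w) \subseteq T^\ast$ suffices. If $u$ (say) is promoted, then the premise $e_Q \subseteq \L_Q(u)$ forces $T^\ast \in \{T_Q, T_Q^+\}$, while the accompanying neighbor condition $\L_Q(v) \cap e_Q \ne \emptyset$ excludes the third case of $\E_Q$ for $v$, placing $\E_Q(v)$ inside $T_Q^+$ alongside $\E_Q(u) = T_Q^+$. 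The argument for $\Delta$ at $\Tr$ is an exact three-child adaptation of the inductive step for $\L_Q$, with condition~(ii)'s case analysis splitting on which of $T_A^\ast, T_B^\ast, T_C^\ast$ differs from $\Tr$ (and all three equalling $\Tr$ gives $\Delta(w) \subseteq \Tr$).

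The main obstacle I anticipate is the condition~(ii) case analysis for $\L_Q$ and $\E_Q$. The subtle point is that the merging rule treats an assignment to the intermediate triangle $T_Q$ as a ``no restriction'' flag rather than as a literal placement, so one must confirm that whenever $\E_L(w) \ne T_Q$ (respectively $\E_R(w) \ne T_Q$), the corresponding case of the $\L_Q$ definition cannot fire for $w$; this is exactly what lets a common triangle from the IH propagate upward undisturbed by the merge. The postprocessing into $\E_Q$ is delicate in a parallel way: the ``all neighbors of $u$ intersect $e_Q$'' side condition on promotion is precisely what keeps the common-triangle property intact when one endpoint is promoted to $T_Q^+$ and the other is not.
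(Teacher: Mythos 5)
Your proof is correct and mirrors the paper's inductive structure exactly: the base case, the derivation of $\E_Q$ from $\L_Q$ via the promotion rule, the merge argument for $\L_Q$ on non-trivial pockets, and the three-child adaptation for $\Delta$ all match the paper's own proof. The only differences are cosmetic: you argue the merge step for $\L_Q$ directly by a case analysis on whether the inherited common triangles $T_L^\ast$ and $T_R^\ast$ equal $T_Q$, whereas the paper packages the same observation (that $\L_Q(w)\subseteq\E_L(w)$ unless $\E_L(w)=T_Q$, and symmetrically for $R$) as a short contradiction; also, in the base case you write $T_Q$ where you mean $T_Q^+$, since a trivial pocket has no $T_Q$.
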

  \begin{proof}
    These functions are sketches if neighboring vertices are assigned to simplices of a common triangle and~$c_k$ is assigned to~$p_k$ for all~$k$; with~$i\leq k\leq j$ in the case of~$\L_Q$ and~$\E_Q$ with~$Q=Q_{(p_i,p_j)}$.
    We prove that the functions are sketches by induction along the definitions of $\L_Q$ and $\E_Q$.

        First consider $\L_Q$ for a trivial pocket $Q$.
        $\L_Q$ assigns all vertices to simplices of the triangle containing the edge $(p_i,p_j)$, and assigns~$\L_Q(c_i)=p_i$ and~$\L_Q(c_j)=p_j$, so~$\L_Q$ is a sketch.
    
    Next, suppose that $\E_Q$ is defined for a pocket $Q$.
    By induction, $\L_Q$ is defined and a sketch.
    If~$\L_Q$ assigns~$c_k$ to~$p_k$, so does~$\E_Q$.
    If~$\E_Q(v)=T^+_Q$, all neighbors of~$v$ are by definition assigned to simplices of~$T^+_Q$.
    If~$\E_Q(v)\neq T^+_Q$, then~$\L_Q(v)\supseteq \E_Q(v)\neq\emptyset$, so if~$\L_Q$ assigns neighboring vertices to simplices of a common triangle, so does~$\E_Q$.
    So~$\E_Q$ is a sketch.

        Next, suppose that $\L_Q$ is defined for a non-trivial pocket $Q$.
    Then~$\E_L$ and~$\E_R$ are sketches for the subpockets~$L$ and~$R$ of~$Q$ with~$T^+_L=T^+_R=T_Q$.
    Furthermore,~$\L_Q$ assigns~$c_k$ to~$p_k$ for all~$i\leq k\leq j$.
    Suppose for a contradiction that~$\L_Q$ assigns two neighboring vertices to simplices that are not simplices of a common triangle.
    Since both~$\E_L$ and~$\E_R$ assign neighboring vertices to simplices of common triangles, there are neighboring vertices~$u$ and~$v$ such that~$\L_Q(u)\not\subseteq\E_L(u)$ and~$\L_Q(v)\not\subseteq\E_R(v)$.
    So by definition of~$\L_Q$, we have~$\E_L(v)\cap\E_R(v)=\emptyset$ and~$\E_L(u)=\E_R(v)=T_Q$.
    Because~$\E_L$ assigns~$u$ and~$v$ to a common triangle, we have~$\E_L(u)\cap\E_L(v)=T_Q\cap\E_L(v)\neq\emptyset$, contradicting that~$\E_L(v)\cap T_Q=\E_L(v)\cap\E_R(v)=\emptyset$.
    Hence~$\L_Q$ is a sketch.
    
    An analogous argument shows that~$\Delta$ is a sketch if for pockets~$A$,~$B$ and~$C$ with~$T^+_A=T^+_B=T^+_C=\Tr$, each of~$\E_A$,~$\E_B$ and~$\E_C$ are sketches, and for all vertices~$v$, we have~$\Tr\subseteq(\E_B(v)\cap\E_C(v))\cup(\E_A(v)\cap\E_C(v))\cup(\E_A(v)\cap\E_B(v))$ or~$\E_A(v)\cap\E_B(v)\cap\E_C(v)\neq\emptyset$.
  \end{proof}
  By the following lemma,~$\L_Q$ and~$\E_Q$ are defined if and only if~$Q$ has a sketch.
  \begin{lemma}
      If a pocket $Q$ is sketchable, then $\L_Q$ is defined and interior well-behaved, and $\E_Q$ is defined and well-behaved.
      \label{lem:greedyPocket}
  \end{lemma}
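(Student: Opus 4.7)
The plan is to prove both claims simultaneously by structural induction on the pocket tree rooted at $Q$.

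For the base case, $Q$ is a trivial pocket with lid $e_Q=(p_i,p_j)$. The explicit definition makes $\L_Q$ a sketch by Lemma~\ref{lem:sketchIfDefined}, and since any interior local sketch $\Gamma$ must have $\Gamma(c_\ell)=p_\ell$ for $\ell\in\{i,j\}$ and $\Gamma(v)\in\{p_i,p_j,e_Q\}$ for every other vertex $v$, picking the maximal option $e_Q$ gives interior well-behavedness. With $\L_Q$ in place, $\E_Q$ is defined, and a short case analysis on the three branches of its defining equation shows it is well-behaved.

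For the inductive step, let $Q$ be non-trivial with child pockets $L$ and $R$ sharing the inner triangle $T_Q=T_L^+=T_R^+$. Any sketch for $Q$ restricts to sketches for $L$ and $R$, so the induction hypothesis supplies defined and well-behaved $\E_L$ and $\E_R$. The central technical step is to associate, to any local sketch $\Gamma$ of $Q$ (obtained from a witness of sketchability via Lemma~\ref{lem:contractedSketch}), local sketches $\Gamma_L,\Gamma_R$ of the children: keep $\Gamma(v)$ unchanged whenever it lies inside the corresponding child pocket or on $T_Q$, and contract it to $T_Q$ otherwise. An adjacency check — resting on the observation that $T_Q$ is the unique triangle incident to both a child pocket and the rest of $Q$ — confirms these are legitimate local sketches, and well-behavedness of $\E_L,\E_R$ then yields containments $\Gamma_L(v)\cap T_Q\subseteq \E_L(v)$ and $\Gamma_R(v)\cap T_Q\subseteq \E_R(v)$ for every $v$.

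I would use these containments twice. First, for a single witnessing sketch $\Gamma$ of $Q$, I verify that one of the first three branches of the defining equation for $\L_Q$ applies at every $v$: if $\Gamma(v)\subseteq T_Q$ the first branch is non-empty, while if $\Gamma(v)$ lies strictly inside $L$ then $\Gamma_R(v)=T_Q$ forces $\E_R(v)=T_Q$, so a fallback branch applies (and symmetrically for $R$) — hence $\L_Q$ is defined. Second, quantifying over all interior local sketches $\Gamma$, the containments directly yield $\Gamma\preceq_Q\L_Q$, i.e., interior well-behavedness. With $\L_Q$ established, $\E_Q$ is defined by its equation and is a sketch by Lemma~\ref{lem:sketchIfDefined}; for its well-behavedness I compare with an arbitrary local sketch $\Gamma$ and, in the non-trivial case where $\Gamma(v)$ extends into $T_Q^+$ beyond $e_Q$, verify the two side conditions of the first branch — that $e_Q\subseteq\L_Q(v)$ and that every neighbor $u$ of $v$ satisfies $\L_Q(u)\cap e_Q\neq\emptyset$ — using interior well-behavedness of $\L_Q$ applied to the trace of $\Gamma$ inside $Q$ together with the original adjacency constraints of $\Gamma$.

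The hardest step, I expect, is the contraction construction and its adjacency check: a neighbor $u$ of a contracted vertex $v$ must still share a triangle with the new image $T_Q$ of $v$, which forces the originally shared triangle of $u$ and $v$ to have been $T_Q$ itself. The remaining bookkeeping — tracking which placements are forced across the three branches of the defining equation for $\L_Q$ and the analogous analysis for $\E_Q$ — is then routine, but needs care to ensure that maximality holds against every local sketch rather than only the single witness of sketchability.
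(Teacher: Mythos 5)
Your proposal takes essentially the same approach as the paper: mutual structural induction on $\L_Q$ and $\E_Q$, with the key step of contracting a local sketch $\Gamma$ of $Q$ to local sketches $\Gamma_L,\Gamma_R$ of the child pockets and then invoking well-behavedness of $\E_L,\E_R$ to obtain the containments $\Gamma_L(v)\cap T_Q\subseteq\E_L(v)$ and $\Gamma_R(v)\cap T_Q\subseteq\E_R(v)$, from which definedness and (interior) well-behavedness follow. One small caveat: your contraction rule (``keep $\Gamma(v)$ unchanged whenever it lies \emph{on $T_Q$}'') is ambiguous and, read as ``keep if $\Gamma(v)\subseteq T_Q$'', would not always yield a valid local sketch of $L$ (e.g.\ $\Gamma(v)=p_j$ is a simplex of $T_Q$ but is neither in $L$ nor equal to $T_L^+$); the paper's rule is simply to replace $\Gamma(v)$ by $T_Q$ whenever $\Gamma(v)\not\subseteq L$, which you should adopt.
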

  \begin{proof}
    We prove this by structural induction along the definitions of $\L_Q$ and $\E_Q$.
    
    For the base case, consider $\L_Q$ for a trivial pocket $Q$ with lid $e_Q=(p_i,p_{i+1})$.
    As $\L_Q$ is defined unconditionally, it is a sketch by Lemma~\ref{lem:sketchIfDefined}.
    Observe that for any interior local sketch $\Gamma$ of $Q$, we have $\Gamma(c_i)=p_i$ and $\Gamma(c_{i+1})=p_{i+1}$.
    For all vertices~$v\notin\{c_i,c_j\}$, we have~$\Gamma(v)\cap e_Q\subseteq\L_Q(v)$, so $\Gamma\preceq_Q\L_Q$.
    That is, $\L_Q$ is interior well-behaved.

        For the inductive step of $\E_Q$, consider a (not necessarily trivial) sketchable pocket~$Q$ with lid $e_Q=(p_i,p_j)$.
        By induction we may assume that $\L_Q$ is defined and interior well-behaved.
        Therefore $\E_Q$ is defined and a sketch (by Lemma~\ref{lem:sketchIfDefined}).
        It remains to show that~$\E_Q$ is well-behaved, so for a contradiction suppose that it is not.
    Then there exists a local sketch $\Gamma$ of $Q$ and a vertex~$v$ for which~$\Gamma(v)\cap T^+_Q\not\subseteq\E_Q(v)$.
    Because~$\L_Q$ does not assign any vertex to simplices outside~$Q$, we have~$\L_Q(v)\subseteq Q$, and by definition of~$\E_Q$, we have~$\L_Q(v)\cap e_Q\subseteq\E_Q(v)\cap T^+_Q$.
    By interior well-behavedness of $\L_Q$ and assumption that~$\Gamma(v)\cap T^+_Q\not\subseteq\E_Q(v)$, we have~$\Gamma(v)\not\subseteq Q$.
    Therefore,~$\Gamma(v)=T^+_Q$ and for all~$(u,v)\in E$ we have~$\Gamma(u)\cap e_Q\neq\emptyset$.
    By interior well-behavedness of $\L_Q$, we have~$\Gamma(v)\cap e_Q=e_Q\subseteq\L_Q(v)$, and for all~$(u,v)\in E$, that~$\emptyset\neq\Gamma(u)\cap e_Q\subseteq \L_Q(u)$ and hence~$\L_Q(u)\cap e_Q\neq\emptyset$.
    So by definition we have~$\E_Q(v)=T^+_Q$, contradicting that~$\Gamma(v)\cap T^+_Q\not\subseteq\E_Q(v)$, so $\E_Q$ is well-behaved.

        For the inductive step of $\L_Q$, suppose that~$Q$ is a non-trivial sketchable pocket with lid~$e_Q=(p_i,p_j)$.
    Since $Q$ is sketchable, so are the child pockets $L$ and $R$ of $Q$ (with~$T^+_L=T^+_R=T_Q$).
    So inductively, we may assume that $\E_L$ and $\E_R$ are well-behaved sketches for~$L$ and~$R$.
    Let~$\Gamma_L$ be the local sketch for $L$ obtained from $\Gamma$ by replacing $\Gamma(v)$ by $T^+_L=T_Q$ whenever $\Gamma(v)\not\subseteq L$.
    Define~$\Gamma_R$ to be the analogous local sketch for $R$. 
    For the sake of contradiction, assume that~$\L_Q$ is not defined.
    Then there is some vertex~$v$ for which~$\E_L(v)\cap\E_R(v)=\emptyset$ and~$T_Q\not\subseteq\E_L(v)\cup\E_R(v)$.
    Because~$T_Q\not\subseteq\E_L(v)\cup\E_R(v)$, we have~$T_Q\not\subseteq\Gamma_L(v)\cup\Gamma_R(v)$, and therefore~$\Gamma(v)\subseteq L$ and~$\Gamma(v)\subseteq R$.
    So~$\Gamma(v)\subseteq L\cap R=p_m$, where~$p_m$ is the vertex of~$T_Q$ not on the lid of~$Q$.
    Since~$\Gamma(v)\neq\emptyset$, well-behavedness of~$\E_L$ and~$\E_R$ tells us that~$p_m\in\E_L(v)$ and~$p_m\in\E_R(v)$.
    But then~$\E_L(v)\cap\E_R(v)\neq\emptyset$, which is a contradiction, so~$\L_Q$ is defined.
        To show that $\L_Q$ is also interior well-behaved, we show that~$\Gamma(v)\cap e_Q\subseteq\L_Q(v)$ for all $v \in V$.
    By well-behavedness of~$\E_L$ and~$\E_R$, we have~$\Gamma(v)\cap e_Q\subseteq\E_L(v)$ and~$\Gamma(v)\cap e_Q\subseteq\E_R(v)$.
    So~$\Gamma(v)\cap e_Q\subseteq \E_L(v)\cap\E_R(v)$, and by definition of~$\L_Q$ we have~$\E_L(v)\cap\E_R(v)\subseteq\L_Q(v)$, and hence~$\Gamma(v)\cap e_Q\subseteq\L_Q(v)$.
    So $\L_Q$ is interior well-behaved.
  \end{proof}

Similarly, we can show for a given $\T$ of a polygon $P$, that if $\T$ has a sketch, then $\Delta$ is defined and a well-behaved sketch.
\begin{restatable}[$\star$]{lemma}{lemGreedyTriangulation}
  If there exists a sketch for a given triangulation $\T$ of a simple polygon~$P$, then~$\Delta$ is defined and well-behaved.
  \label{lem:greedyTriangulation}
\end{restatable}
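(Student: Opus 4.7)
The plan is to mirror the inductive step for $\L_Q$ in the proof of Lemma~\ref{lem:greedyPocket}, adapted to the three child pockets $A$, $B$, $C$ of the root triangle (with $T_A^+ = T_B^+ = T_C^+ = \Tr$). Starting from a sketch $\Gamma$ of $\T$, I would for each $X \in \{A, B, C\}$ construct a local sketch $\Gamma_X$ of $X$ by replacing $\Gamma(v)$ with $\Tr$ whenever $\Gamma(v) \not\subseteq X$. The verification that $\Gamma_X$ is a valid local sketch is the same routine bookkeeping as in the two-pocket case. Applying Lemma~\ref{lem:greedyPocket} to each $\Gamma_X$ then yields that $\E_A$, $\E_B$, and $\E_C$ are defined and well-behaved.

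For definedness of $\Delta$, I would proceed by contradiction: suppose $\Delta(v)$ is undefined for some vertex $v$. Then $\E_A(v) \cap \E_B(v) \cap \E_C(v) = \emptyset$ and none of the three ``two-of-three'' conditions in the definition of $\Delta$ holds, so at least two of $\E_A(v), \E_B(v), \E_C(v)$ differ from $\Tr$; pick two such, say $\E_X(v)$ and $\E_Y(v)$. Then $\Tr \not\subseteq \E_X(v)$ and $\Tr \not\subseteq \E_Y(v)$, so well-behavedness of $\E_X$ and $\E_Y$ applied to $\Gamma_X$ and $\Gamma_Y$ forces $\Gamma_X(v) = \Gamma(v) \subseteq X$ and $\Gamma_Y(v) = \Gamma(v) \subseteq Y$. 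The key geometric observation is that $X$ and $Y$ meet exactly in the common vertex $p_m$ of the two lids $e_X$ and $e_Y$, so $\Gamma(v) = p_m$. Since $p_m$ lies on exactly two of the three edges of $\Tr$, it does not lie in the third pocket $Z$, giving $\Gamma_Z(v) = \Tr$ and, by well-behavedness, $\E_Z(v) = \Tr$. But then $p_m \in \E_X(v) \cap \E_Y(v) \cap \Tr = \E_A(v) \cap \E_B(v) \cap \E_C(v)$, contradicting emptiness.

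For well-behavedness of $\Delta$, the core claim is that $\Gamma(v) \cap \Tr \subseteq \E_X(v)$ for every $X \in \{A, B, C\}$ and every sketch $\Gamma$ of $\T$. If $\Gamma(v) \subseteq X$, this is well-behavedness of $\E_X$ applied to $\Gamma_X$; otherwise $\Gamma_X(v) = \Tr$, so well-behavedness forces $\Tr \subseteq \E_X(v)$ and hence $\E_X(v) = \Tr$, making the inclusion automatic. With this claim in hand, each of the four branches of the defining equation for $\Delta$ directly yields the required containment $\Gamma(v) \cap \Tr \subseteq \Delta(v)$.

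I expect the main obstacle to be conceptual rather than computational: pinning down the correct pigeonhole (``at least two of the three $\E_X(v)$ are proper sub-simplices'') and combining it with the geometric fact that any two child pockets meet only at a single vertex of $\Tr$, while each such vertex belongs to exactly two of the three pockets. Once these are in place, the argument is a direct adaptation of the two-pocket inductive step already carried out for $\L_Q$, with essentially no new machinery required.
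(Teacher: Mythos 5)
Your proposal is correct and follows essentially the same route as the paper: restrict $\Gamma$ to local sketches $\Gamma_A,\Gamma_B,\Gamma_C$ of the three child pockets (which is exactly the construction of Lemma~\ref{lem:contractedSketch}), invoke well-behavedness of $\E_A,\E_B,\E_C$ via Lemma~\ref{lem:greedyPocket}, and exploit that any two child pockets meet only in a single vertex of $\Tr$ to force the triple intersection to be nonempty. The difference is purely organizational: the paper cases directly on where $\Gamma(v)$ lies (outside some $X\cup Y$, equal to $\Tr$, or a vertex of $\Tr$), while you argue by contradiction from $\Delta(v)$ being undefined and isolate two pockets with $\E_X(v)\neq\Tr$ --- both resolve to the same geometric fact.
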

The following two corollaries summarize that the existence of a sketch is equivalent to the existence of a well-behaved sketch, both for pockets $Q$ and for a complete triangulation $\T$.
\begin{corollary}
  $\L_Q$ and~$\E_Q$ are defined if and only if~$Q$ has a sketch.\label{cor:definedPocket}
\end{corollary}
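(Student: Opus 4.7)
The corollary packages together the two implications that the preceding lemmas have already established, so the plan is essentially to observe that both directions follow immediately from Lemma~\ref{lem:sketchIfDefined} and Lemma~\ref{lem:greedyPocket}, with no additional structural argument required.

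For the forward direction, I would argue: suppose that $\L_Q$ and $\E_Q$ are defined. Then Lemma~\ref{lem:sketchIfDefined} directly yields that $\E_Q$ (and indeed $\L_Q$) is a sketch for $Q$. In particular, $Q$ admits a sketch, hence is sketchable by definition. This step is essentially a one-line invocation.

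For the backward direction, I would argue: suppose $Q$ is sketchable, so there exists at least one sketch $\Gamma$ for $Q$. Then Lemma~\ref{lem:greedyPocket} states exactly that under this hypothesis both $\L_Q$ is defined (and interior well-behaved) and $\E_Q$ is defined (and well-behaved). Since the corollary only asserts definedness, the well-behavedness conclusions are not needed here; the corollary is strictly weaker than Lemma~\ref{lem:greedyPocket}.

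There is no real obstacle to overcome: all the inductive structural work — propagating sketchability through the recursive definitions of $\L_Q$ and $\E_Q$, and showing that undefinedness cannot arise when a sketch exists — was done inside Lemma~\ref{lem:greedyPocket}. The only thing worth remarking, to keep the logical chain tidy, is that by construction $\E_Q$ is defined whenever $\L_Q$ is, so the conjunction ``$\L_Q$ and $\E_Q$ are defined'' is equivalent to ``$\L_Q$ is defined''; hence the corollary could equivalently be phrased in terms of $\L_Q$ alone. The entire proof therefore amounts to two sentences, one per direction, each citing the relevant preceding lemma.
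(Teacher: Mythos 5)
Your proof is correct and matches the paper's approach exactly: the corollary is stated right after Lemma~\ref{lem:greedyPocket} precisely because it is the conjunction of that lemma (sketchable implies defined) with Lemma~\ref{lem:sketchIfDefined} (defined implies sketch). Your side remark that $\E_Q$ is defined iff $\L_Q$ is defined is also consistent with the paper's explicit statement that ``$\E_Q$ is defined in terms of $\L_Q$ and will hence be defined if and only if $\L_Q$ is defined.''
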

\begin{corollary}
  Sketch~$\Delta$ is defined if and only if the triangulation $\T$ has a sketch.\label{cor:definedTriangulation}
\end{corollary}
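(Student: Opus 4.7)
The plan is to assemble the biconditional by combining the two lemmas that immediately precede it. For the forward direction, I would invoke Lemma~\ref{lem:sketchIfDefined}, which asserts that $\Delta$ is itself a sketch whenever it is defined; hence defining $\Delta$ already exhibits a sketch of $\T$, so $\T$ admits a sketch. For the reverse direction, I would appeal directly to Lemma~\ref{lem:greedyTriangulation}, whose statement is precisely that $\Delta$ is defined (indeed, well-behaved) whenever $\T$ admits a sketch. The two implications together give the stated equivalence.

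There is no substantial obstacle at this point: the combinatorial work has already been carried out in the earlier lemmas, so this corollary really is a summary. The one point worth flagging is that the omitted proof of Lemma~\ref{lem:greedyTriangulation} follows the same structural-induction template as Lemma~\ref{lem:greedyPocket}, applied to the three pockets $A$, $B$, $C$ incident to $\Tr$ rather than to the two child pockets $L$, $R$ of a non-trivial pocket. Given a sketch $\Gamma$ of $\T$, one would restrict $\Gamma$ to each of the three pockets (replacing any simplex not contained in the pocket by $\Tr = T_A^+ = T_B^+ = T_C^+$) and apply Corollary~\ref{cor:definedPocket} to conclude that $\E_A$, $\E_B$, $\E_C$ are defined and well-behaved. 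Then, mirroring the argument used in the inductive step for $\L_Q$, a vertex $v$ could be assigned an ``undefined'' location by the equation defining $\Delta$ only if at least two of $\E_A(v), \E_B(v), \E_C(v)$ fail to contain $\Tr$ and their pairwise intersections are empty; intersecting with $\Gamma(v)$ forces $\Gamma(v)$ into the common vertex of the offending pockets, and well-behavedness of the $\E$'s then contradicts the assumed empty intersection. Interior well-behavedness of $\Delta$ (with $\Tr$ playing the role of $T_Q^+$) follows from the same chain of inclusions $\Gamma(v) \cap \Tr \subseteq \E_A(v) \cap \E_B(v) \cap \E_C(v) \subseteq \Delta(v)$ used in Lemma~\ref{lem:greedyPocket}.
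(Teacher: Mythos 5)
Your proof of the corollary is correct and follows the paper exactly: it is precisely the conjunction of Lemma~\ref{lem:sketchIfDefined} (if $\Delta$ is defined then it is a sketch for $\T$) and Lemma~\ref{lem:greedyTriangulation} (if $\T$ has a sketch then $\Delta$ is defined). Your supplementary sketch of the omitted proof of Lemma~\ref{lem:greedyTriangulation} also tracks the paper's appendix argument closely, including the reduction to the case where $\Gamma(v)$ is a vertex of $\Tr$ and the use of well-behavedness of $\E_A,\E_B,\E_C$ to derive the needed nonempty triple intersection.
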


\subparagraph{Computing triangulation-respecting drawings}\label{sec:algorithm}
Any sketch implies a drawing that places vertices anywhere in their assigned simplex.
Conversely, any triangulation-respecting drawing implies a sketch that assigns vertices to the corresponding simplex.
The definition of~$\Delta$ hence directly results in a~$\bigO(t|V||E|)$-time algorithm both to decide the existence of, and to compute a triangulation-respecting drawing.
We can improve this running time to linear.

\begin{restatable}[$\star$]{theorem}{thmAlgorithmComputeSketch}
\label{thm:AlgorithmComputeSketch}
     There is a linear-time algorithm to decide if $(G,C)$ has a triangulation-respecting drawing for a simple polygon $P$ with fixed triangulation~$\mathcal T$; the same algorithm also constructs a drawing if one exists.
\end{restatable}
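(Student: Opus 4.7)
The plan is to implement the recursive definitions of $\L_Q$, $\E_Q$ and $\Delta$ directly as a single post-order traversal of the dual tree of $\T$, but using a compact representation that only stores the few vertices whose assignment deviates from the natural ``default''. The naive interpretation rebuilds the entire function $\E_Q$ for every pocket and hence spends $\Theta(|V|+|E|)$ time at each of the $\Theta(t)$ pockets; I will amortize this away.

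First I make the cycle vertices implicit: since $c_k$ is always pinned to $p_k$, it need not appear in any per-pocket data structure, but its assignment can be looked up in $O(1)$ whenever a constraint on $c_k$ is needed. For each pocket $Q$ I maintain only a list $A_Q$ of non-cycle vertices $v$ with $\E_Q(v)\neq T_Q^+$, together with their $\E_Q$-values. For a trivial pocket this list is empty. At a non-trivial pocket $Q$ with children $L,R$, every non-cycle vertex not in $A_L\cup A_R$ satisfies $\E_L(v)=T_Q=\E_R(v)$, so $\L_Q(v)=T_Q$ and, provided its neighbors cooperate, $\E_Q(v)=T_Q^+$. I therefore iterate only over $A_L\cup A_R$: for each such $v$, I compute $\L_Q(v)$ by intersecting $\E_L(v)$ and $\E_R(v)$ (using $T_Q$ as a default when one list is absent), report failure if the intersection is undefined, and then inspect $v$'s incident edges in $G$ to decide whether $\E_Q(v)=T_Q^+$ (in which case $v$ leaves the active list) or a smaller simplex (in which case it enters $A_Q$). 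During the edge inspection, any non-active neighbor $u$ has $\L_Q(u)=T_Q\supseteq e_Q$ and needs no work; only the $O(|A_L|+|A_R|)$ active neighbors are consulted. The root $\Tr$ is handled identically by combining the three active lists of its child pockets through the $\Delta$ rule.

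The main obstacle is the amortized bound on the total work. The structural observation driving it is that $\E_Q(v)\neq T_Q^+$ only if either $\L_Q(v)$ is a strict subsimplex of $T_Q$ or some neighbor $u$ of $v$ is pinned to the third vertex $p_m$ of $T_Q$; inductively both cases trace back to a cycle vertex further down the subtree. I will charge the work of handling $v\in A_Q$ either to $v$ (when its own assigned simplex shrinks) or to the edge $(u,v)$ (when a neighbor's pin forces $v$ into the list). Since each vertex's simplex can only shrink monotonically along the root path --- from a triangle down to an edge down to a single polygon vertex --- a standard accounting bounds the number of refinement events per vertex by $O(1)$, giving $O(|V|)$ total vertex work and $O(|E|)$ total edge work after summing degrees. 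Combined with the $O(t)$ cost of the tree traversal itself, this yields the claimed $O(|V|+|E|+t)=O(|V|+|E|)$ bound.

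Correctness follows directly from the groundwork already laid: by Lemmas~\ref{lem:sketchIfDefined}--\ref{lem:greedyTriangulation} and Corollaries~\ref{cor:definedPocket}--\ref{cor:definedTriangulation}, the algorithm succeeds precisely when $\Delta$ is defined, which is precisely when a triangulation-respecting drawing exists. If it succeeds, placing each vertex at an arbitrary interior point (say the centroid) of its assigned simplex in an additional linear pass produces such a drawing.
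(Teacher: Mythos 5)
Your reduction of correctness to Lemmas~\ref{lem:sketchIfDefined}, \ref{lem:greedyPocket}, \ref{lem:greedyTriangulation} and Corollaries~\ref{cor:definedPocket}, \ref{cor:definedTriangulation} is exactly right, and your overall structure --- a post-order traversal of the dual tree with a compact table that exploits the default assignment $T_Q^+$ --- is the same skeleton the paper uses. The gap is in the amortization. You iterate at every merge step over the \emph{entire} active set $A_L \cup A_R$, but a vertex $v$ that is once assigned to a polygon vertex $p_k$ stays active at every ancestor pocket from then on (since $p_k \subsetneq T_Q$ and $p_k$ never changes, $\E_Q(v)\neq T_Q^+$ forever), while experiencing no further refinement event and triggering no further edge charge. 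Your scheme only charges work to $v$ ``when its own assigned simplex shrinks'' or to an edge ``when a neighbor's pin forces $v$ into the list,'' so this steady-state re-examination of $v$ at $\Theta(t)$ ancestor pockets is unaccounted for; the active lists near the root can have size $\Theta(n)$, giving $\Theta(nt)$ total work, not linear.

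The paper avoids exactly this by maintaining, for each simplex of $\T$, a doubly-linked list of the graph vertices currently assigned to it, and at the step that merges $L$ and $R$ into $Q$ it touches \emph{only} the lists of $e_L$, $e_R$, and the apex $p_m$ (plus the neighbors of vertices in the $p_m$-list). Vertices parked at some other polygon vertex $p_k\neq p_m$ are never looked at again until, possibly, the unique step where $p_k$ is the apex. Since each interior edge of $\T$ is the lid of exactly one pocket and each polygon vertex is the apex of at most one pocket in the rooted dual tree, every term $\mathit{num}(e_L)$, $\mathit{num}(e_R)$, $\mathit{num}(p_m)$, $\mathit{deg}(p_m)$ contributes at exactly one step, and the totals sum to $O(|V|+|E|)$. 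To repair your argument you would need to (a) replace the flat active list by these simplex-indexed lists, and (b) explicitly argue that each simplex is ``relevant'' at $O(1)$ merge steps --- which is the content of Lemma~\ref{lem:algorithmStep} and the accounting in the paper's proof of Theorem~\ref{thm:AlgorithmComputeSketch}.
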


\section{Planar triangulation-respecting drawings}\label{sec:planar}

We are given the same input as in Section~\ref{sec:triangulation}, namely an instance $(G,C)$ consisting of a graph~$G$ with $n$ vertices and a cycle $C$ with $t$ vertices, and a simple polygon $P$ with $t$ vertices together with an arbitrary triangulation $\T$ of $P$. 
In addition, we assume that the instance~$(G,C)$ is planar, that is, $G$ has a planar drawing $\emb$ with $C$ on the outer face.
Note that $\emb$ does not necessarily map vertices of $C$ to vertices of $P$.

\begin{figure}[t]\centering
    \includegraphics{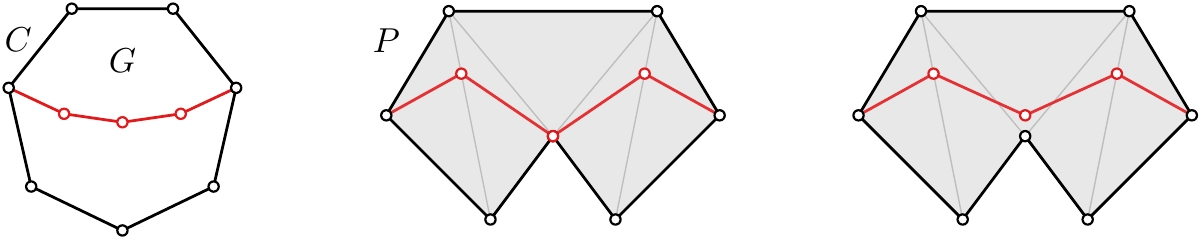}
    \caption{Left: A planar instance. Center: a triangulation-respecting drawing in which two vertices coincide. Right: a perturbed drawing that is planar but not triangulation-respecting.}
    \label{fig:planar}
\end{figure}

Analogously to Section~\ref{sec:triangulation}, we can ask the following question: is there a planar drawing of~$(G,C)$ that respects both $P$ and~$\T$?
However, the answer to this question is often `no', even when both triangulation-respecting drawings and planar polygon-respecting drawings exist.
Consider, for example, Figure~\ref{fig:planar}: a planar triangulation-respecting drawing for this combination of $(G,C)$, $P$, and $\T$ does not exist; any drawing inside $P$ either places two vertices on top of each other, or edges cross edges of the triangulation.
Nonetheless, triangulation-respecting drawings are a useful tool for our final goal of constructing planar polygon-respecting drawings.
For example, the triangulation-respecting drawing of Figure~\ref{fig:planar} can be perturbed infinitesimally to obtain a planar polygon-respecting drawing (that is not triangulation-respecting).
In this section, we show that if a planar instance $(G,C)$ has a triangulation-respecting drawing, then it also has a \emph{weakly-planar triangulation-respecting} one, that is, a triangulation-respecting drawing that is planar and polygon-respecting after infinitesimal perturbation. (Such that vertices are moved to a simplex of $\T$ that contains the original location.)
Hence, the algorithm described in Section~\ref{sec:triangulation} can decide for a planar instance $(G,C)$ whether there is a weakly-planar triangulation-respecting drawing.

Consider now a planar drawing $\emb$ of $(G,C)$. We call the triple $(G,C,\emb)$ a \emph{plane instance}. 
For a weakly-planar triangulation-respecting drawing $\W$, we say that $\W$ \emph{accommodates} $(\emb,\T)$ if there exists a planar polygon-respecting infinitesimal perturbation $\Wt$ of $\W$ that is isotopic to $\emb$ in the plane.
That is, one can be continuously deformed into the other without introducing crossings with itself.
In the following, we construct a weakly-planar triangulation-respecting drawing that accommodates $(\emb,\T)$.
A plane instance $(G,C,\emb)$ is \emph{sketchable} if $(G,C)$ has a sketch (for $\T$).
Recall here, that a sketch does not have a notion of planarity.
However, we show in Theorem~\ref{thm:planarSketchable} that any sketchable plane instance $(G,C,\emb)$ has a drawing $\W$ which accommodates $(\emb,\T)$.

\subparagraph{Minimal plane instances} We show how to transform any plane instance $(G,C,\emb)$ into a \emph{minimal} plane instance, while preserving its sketchability. 
First of all, we carefully triangulate~$(G,C,\emb)$, so as not to influence sketchability (see Lemma~\ref{lem:planar-triangulate} in Appendix~\ref{app:Section4}).
If all faces of~$\emb$ interior to~$C$ are triangles, we call~$(G,C,\emb)$ a \emph{triangulated instance}. A triangulation of a plane instance~$(G,C,\emb)$ is a triangulated instance~$(G',C,\emb')$ such that~$G'$ is a supergraph of~$G$ (with potentially additional vertices) and~$\emb$ is the restriction of~$\emb'$ to~$G$. 
Second, we remove the interior of all separating triangles (see Lemma~\ref{lem:planar-sep-triangle} in Appendix~\ref{app:Section4}).
If $G$ does not have any separating triangles, then we contract any edge not on $C$ that preserves sketchability (see Lemma~\ref{lem:planar-contraction} in Appendix~\ref{app:Section4}) and remove the interior of any separating triangles this edge contraction might create.
If no further simplifications are possible, we call $(G,C,\emb)$ minimal.

\begin{lemma}
  \label{lem:planar-minimal-drawing}
  Every sketchable minimal plane instance~$(G,C,\emb)$ has a drawing that accommodates $(\emb,\T)$.
\end{lemma}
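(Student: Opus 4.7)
The plan is to leverage the minimality conditions to show that the sketch $\Delta$ guaranteed by Corollary~\ref{cor:definedTriangulation} has an explicit realization which, after infinitesimal perturbation, accommodates $(\emb,\T)$.

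First I would invoke Corollary~\ref{cor:definedTriangulation} to obtain the sketch $\Delta$ for $\T$. The central structural claim I would then establish is that, for a minimal plane instance, $\Delta$ places vertices on the simplices of $\T$ in a rigid, $\emb$-consistent way: for each triangle $T$ of $\T$, the subgraph $G_T$ of $G$ induced by the vertices whose sketch image lies in $T$ (together with those on the boundary of $T$ whose incident edges enter $T$ in $\emb$) has a planar embedding compatible with an embedding into $T$; and for each interior edge $e$ of $\T$, the vertices assigned to $e$ by $\Delta$ admit a linear order along $e$ that matches the order forced by $\emb$. The leverage for this claim is that any extra flexibility would either give an edge whose contraction preserves sketchability, or force a separating triangle whose interior could be removed, both contradicting minimality.

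Given this structure, I would construct the drawing $\W$ explicitly: polygon vertices fix $c_i\mapsto p_i$; vertices assigned by $\Delta$ to an interior edge $e$ of $\T$ are placed at distinct interior points of $e$ in the order dictated by $\emb$; and vertices assigned to the interior of a triangle $T$ are placed in the relative interior of $T$ at positions consistent with their relative placement in $\emb$. By the structural claim, every edge of $G$ has its endpoints in simplices of a common triangle of $\T$, so $\W$ is triangulation-respecting.

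The main obstacle, and the final step, is showing that $\W$ admits an infinitesimal perturbation $\Wt$ that is planar, polygon-respecting, and isotopic to $\emb$. I would push each vertex that lies on an edge $e$ of $\T$ slightly into whichever incident triangle $\emb$ dictates, and push each vertex that lies at a polygon vertex slightly into the unique adjacent simplex consistent with $\emb$. Because the combinatorial structure of $\emb$ inside each triangle of $\T$ matches the placement determined by $\Delta$ (by the structural claim), the perturbation introduces no edge crossings, and the resulting drawing is isotopic to $\emb$ in $P$. Hence $\W$ accommodates $(\emb,\T)$, completing the proof.
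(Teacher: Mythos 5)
Your proposal identifies the right tools (Corollary~\ref{cor:definedTriangulation}, edge contraction, separating-triangle removal as the sources of rigidity) but leaves the central claim unproved and misses how far minimality actually pushes. The paper's proof establishes something much stronger and much simpler than your ``structural claim'': by induction over pockets, it shows that in a sketchable minimal instance, $\E_Q$ assigns only $c_k$ to $p_k$ and only neighbors of $c_m$ to the lid $e_Q$, so that in $\Delta$ every vertex not on $C$ is assigned to $\Tr$ or one of its edges. At that point, if any such vertex exists, one can contract an edge from $C$ into $G\setminus C$ while preserving sketchability, contradicting minimality. Hence \emph{all vertices of a minimal instance lie on $C$}, and since $(G,C,\emb)$ is triangulated, the edges of $G$ coincide exactly with the edges of $\T$. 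The accommodating drawing is then the identity map $c_i\mapsto p_i$ --- no interior placement, no linear ordering along edges of $\T$, no perturbation argument of any substance is needed.

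By contrast, your proposal never proves your structural claim; you only assert that ``any extra flexibility would either give an edge whose contraction preserves sketchability, or force a separating triangle,'' but this is precisely the hard content of the lemma, and making it rigorous (as the paper does) leads to the conclusion that there are no interior or edge-assigned non-cycle vertices at all --- which makes the machinery you build afterward (ordering vertices along $\T$-edges, embedding subgraphs $G_T$ inside triangles, checking that the perturbation respects $\emb$) unnecessary. Moreover, even if your weaker structural claim were granted, the planarity-of-perturbation step is genuinely nontrivial and you hand-wave it: you would have to argue that the relative cyclic orders of edges around each vertex, and the nesting of vertices on shared $\T$-edges, can simultaneously be realized by straight-line segments in $\T$'s triangles, none of which is obvious. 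So while the high-level intuition is sound, the proof as written has a genuine gap (the unproved structural claim and the unjustified perturbation step), and it also does not exploit the key simplification that minimality yields.
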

\begin{proof}
    Consider a pocket~$Q=Q_{(p_i,p_j)}$.
    We claim that if~$\E_Q(v)=p_k$ for some~$k\in[i,j]$, then~$v=c_k$, and moreover that if~$\E_Q(v)=e_Q$, then~$Q$ does not consist of a single edge and~$v$ is a neighbor of $c_m$, where~$p_m$ is the third vertex of~$T_Q$.
    If~$Q$ consists of a single edge, then the claim clearly holds.
    If~$Q$ does not consist of a single edge, we have by induction that the claim holds for the subpockets~$L=Q_{(p_i,p_m)}$ and~$R=Q_{(p_m,p_j)}$.
    If~$\E_L(v)=e_L$ for some~$v$, then we claim that the instance is not minimal.  
    Let~$p_l$ be the third vertex of~$T_L$, and without loss of generality assume that~$v$ is the most counter-clockwise neighbor (according to~$\emb$) of~$c_l$ for which~$\E_L(v)=e_L$ (see Figure~\ref{fig:minimal}).
    Then the (triangular) face counter-clockwise of edge~$(c_l,v)$ is a triangle whose third vertex~$u$ does not have~$\E_L(u)=e_L$.
    Since $u$ has $c_l$ and $v$ as neighbors, $\E_L(u)$ is a simplex of $T_L$, but not $e_L$, so by definition of $\E_L$ it is a vertex of $T_L$.
    By induction, $u$ is therefore $c_m$, $c_l$, or $c_i$.
    Because $u$ is a neighbor of $c_l$, $u$ itself is not $c_l$.
    We argue that we can contract an edge while preserving sketchability, contradicting minimality.
    
    First consider the case where $u=c_m$.
    Then $(c_l,c_m)$ divides $G$ into two subgraphs, to the left and to the right of $(c_l,c_m)$.
    We obtain a new sketch by reassigning all vertices of the right subgraph that are placed outside the pocket with lid $(p_l,p_m)$ to that lid.
    By construction, the triangle $c_l,c_m,v$ lies right of $(c_l,c_m)$.
    Then $v$ and its neighbors are reassigned to $p_l$, $p_m$, or $(p_l,p_m)$, so contracting the edge $(u,v)$ maintains sketchability, contradicting minimality.
    
    Now consider the case where $u=c_i$.
    If $\E_Q(u)=\E_Q(v)$, it is clear that we can contract the edge and preserve a sketch.
    So since $\E_Q(v)$ is either $p_i$ or $p_m$, we have $\E_Q(v)=p_m$, but then $\E_R(v)$ is not $T_Q$, so either $\E_R(v)=p_m$ or $\E_R(v) = e_R$.
    The first case implies $v = c_m$ and hence contradicts $\E_L(v) = e_L$.
    In the second case, the inductive hypothesis implies that $v$ is a neighbor of the third vertex $p_r$ of the triangle $T_R$.
    Now consider the subgraph of $G$ that is right of the path consisting of the edges $(p_l,v)$ and $(v,p_r)$.
    Any of its vertices that is assigned outside $L$ and $R$ can be assigned to $p_m$, maintaining a sketch.
    There exists a path from $c_m$ to $v$ that avoids $c_l$ and $c_r$, and the last edge of this path can be contracted, contradicting minimality.
    So~$\E_L(v)\neq e_L$ and symmetrically~$\E_R(v)\neq e_R$.
    By definition,~$\E_Q$ assigns (for~$k\in[i,j]$) only~$c_k$ to~$p_k$, and only neighbors of~$c_m$ to~$e_Q$, so the claim holds.

    Therefore, in the sketch~$\Delta$, all vertices other than those of~$C$ are assigned to edges of~$\Tr$, or~$\Tr$ itself.
    If there is such a vertex not on~$C$, then contracting an edge between~$C$ and~$G\setminus C$ yields an instance with a sketch, contradicting minimality.
    So all vertices in a minimal instance lie on~$C$.
    Because~$G$ is triangulated, this means that its edges coincide with those of the triangulation of~$P$.
    So the instance clearly has an accommodating drawing.
\end{proof}

\begin{figure}[t]
    \centering
    \includegraphics{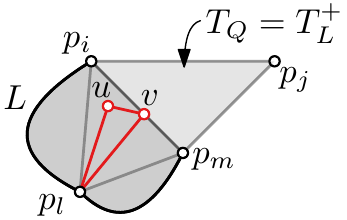}
    \caption{The vertex $u$ must be assigned to $p_i$ or $p_m$. In both cases an edge can be contracted while preserving the existence of a sketch.}
    \label{fig:minimal}
\end{figure}

The proof of Theorem~\ref{thm:planarSketchable} shows that the accommodating drawing of the minimal instance obtained from the simplification procedure (if that instance is sketchable) can be extended to be an accommodating drawing for the original instance, by undoing the simplification steps.

\begin{restatable}[$\star$]{theorem}{thmPlanarSketchable}\label{thm:planarSketchable}
    A plane instance~$(G,C,\emb)$ has a drawing that accommodates $(\emb,\T)$ if and only if $(G,C,\emb)$ is sketchable.
\end{restatable}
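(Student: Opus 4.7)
\subparagraph*{Proof proposal for Theorem~\ref{thm:planarSketchable}}

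The plan is to prove the two directions separately, with the forward direction being essentially immediate and the backward direction obtained by reducing to the minimal case already handled by Lemma~\ref{lem:planar-minimal-drawing}.

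For the ``only if'' direction, suppose $\W$ accommodates $(\emb,\T)$. Define $\Gamma(v)$ to be the smallest-dimensional simplex of $\T$ that contains $\W(v)$. Since $\W$ is triangulation-respecting and polygon-respecting, $\Gamma(c_i)=p_i$ for every cycle vertex. Moreover, for any edge $(u,v)$ of $G$, the segment $\W(u)\W(v)$ avoids properly crossing any edge of $\T$, so $\W(u)$ and $\W(v)$ lie inside a common closed triangle, and hence so do $\Gamma(u)$ and $\Gamma(v)$. Thus $\Gamma$ is a sketch, witnessing that $(G,C,\emb)$ is sketchable.

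For the ``if'' direction, the plan is to iteratively simplify $(G,C,\emb)$ using the three operations cited before the statement (namely triangulation via Lemma~\ref{lem:planar-triangulate}, removal of interiors of separating triangles via Lemma~\ref{lem:planar-sep-triangle}, and contraction of a sketchability-preserving non-cycle edge via Lemma~\ref{lem:planar-contraction}) until no operation applies, producing a minimal sketchable plane instance $(G^\star,C,\emb^\star)$. Termination is immediate because each operation strictly decreases the pair (number of vertices, number of edges) in lexicographic order, or a similar monovariant. Lemma~\ref{lem:planar-minimal-drawing} then supplies a drawing $\W^\star$ that accommodates $(\emb^\star,\T)$. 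The remaining task is to lift $\W^\star$ back along the simplification chain to an accommodating drawing $\W$ for the original $(\emb,\T)$.

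Lifting is done one step at a time in reverse order. Undoing triangulation is trivial: the added vertices and edges can simply be discarded, since the restriction of an accommodating drawing is still accommodating (the planarity and isotopy properties of the perturbation $\Wt$ restrict accordingly). Undoing the removal of a separating triangle's interior is done by drawing the removed subinstance inside the triangular face formed by the three vertices of the separating triangle in $\W$; here the classical Tutte/star-shaped-polygon results apply, because after perturbation the three boundary vertices form a nondegenerate triangle and the interior is a planar instance with its cycle fixed to that triangle. Undoing an edge contraction $(u,v)\to w$ is the most delicate step: the preimage vertices $u$ and $v$ must be placed at two distinct points within an arbitrarily small neighborhood of $\W^\star(w)$, the edge $(u,v)$ inserted, and the remaining edges redistributed between $u$ and $v$ according to the rotation system in~$\emb$. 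Because the perturbation in the definition of ``accommodates'' is infinitesimal and only moves vertices within the simplex that contains them, we can choose the splitting neighborhood small enough to stay inside the simplex $\Gamma(w)$ and to preserve both triangulation-respectedness and the isotopy to $\emb$.

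The main obstacle will be the edge-contraction reversal: we must argue that the two split copies and their incident edges can be placed consistently with the cyclic order prescribed by $\emb$, and that the resulting perturbation is still isotopic to $\emb$. The key observation to make this work is that contraction was only applied to edges for which Lemma~\ref{lem:planar-contraction} guarantees sketchability is preserved, and that locally near $\W^\star(w)$ we have enough freedom (in a disk of arbitrarily small radius inside the simplex $\Gamma(w)$) to reproduce the local rotation of $\emb$ around $u$ and $v$. Once each of the three reverse operations preserves the ``accommodates'' property, induction on the length of the simplification chain yields a drawing $\W$ that accommodates $(\emb,\T)$, completing the proof.
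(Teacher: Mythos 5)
Your proposal is correct and follows essentially the same route as the paper: the ``only if'' direction is immediate (an accommodating drawing induces a sketch), and the ``if'' direction simplifies to a minimal instance via Lemmas~\ref{lem:planar-triangulate}, \ref{lem:planar-sep-triangle}, and \ref{lem:planar-contraction}, applies Lemma~\ref{lem:planar-minimal-drawing}, and lifts the resulting drawing back along the simplification chain. Note that the lifting arguments you spell out (restriction for added vertices, reinserting $H$ inside a separating triangle, splitting a contracted vertex) are exactly what Lemmas~\ref{lem:planar-sep-triangle} and~\ref{lem:planar-contraction} already assert, so in the paper's proof they are simply cited rather than re-derived; where you propose a Tutte drawing of $H$ inside the perturbed separating triangle, the paper instead collapses all of $H$ onto one vertex of the separating triangle in $\W$ and only spreads it out in the perturbation $\Wt$, which sidesteps the question of whether the Tutte drawing of $H$ would itself be triangulation-respecting.
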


The algorithm implied by Theorem~\ref{thm:AlgorithmComputeSketch} can check in linear time if a plane instance $(G,C,\emb)$ has a sketch and via Theorem~\ref{thm:planarSketchable} the same algorithm can decide in linear time if $(G,C,\emb)$ has an accommodating drawing. This drawing can be constructed in polynomial time, following the (polynomial number of) steps in the minimization procedure.

\section{Sufficient conditions for polygon-universality}\label{sec:universality}

In Section~\ref{sec:necessary} we proved that the \ref{cond:banana} and \ref{cond:ninja} Conditions are necessary for an instance $(G,C)$ to be polygon-universal. Here we show, using triangulation-respecting drawings, that these two conditions are sufficient as well. In Sections~\ref{sec:algorithm} and~\ref{sec:planar} we argued that an instance $(G,C)$ has a triangulation-respecting drawing for a triangulation $\T$ of $P$ if and only if it has a sketch for $\T$; we also gave an algorithm that tests whether such a sketch exists. Below we show that if the \ref{cond:banana} and \ref{cond:ninja} Conditions are satisfied for an instance $(G,C)$ then it has a sketch for any triangulation $\T$. We do so by examining the testing algorithm more closely.

We first show that the \ref{cond:banana} Condition alone already implies that each pocket has a sketch. The \ref{cond:ninja} Condition then allows us to combine sketches at the root $\Tr$ of $\T$. Denote by $Q_e=Q_{(p_i,p_j)}$ the pocket with lid $e = (p_i,p_j)$. We want to determine whether an individual vertex can be drawn outside a pocket. Definition~\ref{def:pulled} relates the position of a vertex in a sketch of a pocket to its distance to points on the cycle, see Figure~\ref{fig:pulledCases}.

\begin{figure}[h]
    \centering
  \includegraphics{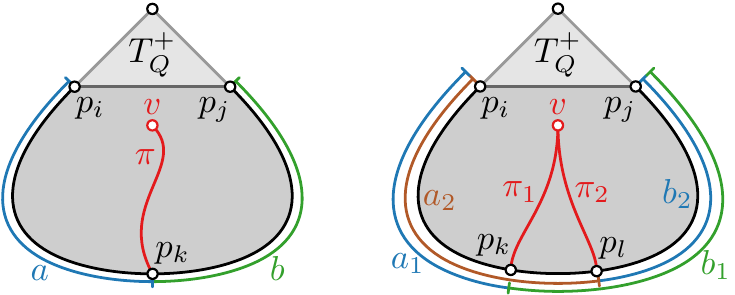}
  \caption{Cases for Definition~\ref{def:pulled}. Left: we have~$\pi\leq a$ and~$\pi\leq b$. Right: we have~$\pi_1\leq a_1+1$, $\pi_1\leq b_1-1$, $\pi_2\leq a_2-1$, and~$\pi_2\leq a_2+1$.}
  \label{fig:pulledCases}
\end{figure}

\begin{definition}\label{def:pulled} 
Vertex~$v$ is \emph{pulled} by pocket~$Q_{(p_i,p_j)}$ if and only if either of the following two conditions hold:
\begin{enumerate}
    \item for some $k\in\{i,\dots,j\}$, we have $d_G(v,c_k) \leq \min(k-i,j-k)$;
  \item for some $k,l\in\{i,\dots,j\}$, we have $d_G(v,c_k) \leq \min(k-i+1,j-k-1)$ and\\
      \hphantom{for some $k,l\in\{i,\dots,j\}$, we have }$d_G(v,c_l)\, \leq \min(l-i-1,j-l+1)$.
\end{enumerate}
\end{definition}

Let $Q$ be a pocket and let $v$ be an arbitrary vertex. Either the well-behaved sketch $\E_Q$ of~$Q$ places $v$ outside of $Q$ or Lemma~\ref{lem:pulledPoint} characterizes where in $Q$ vertex $v$ ``is stuck''. 

\begin{lemma}
  \label{lem:pulledPoint}
    Let $v$ be a vertex and~$Q=Q_{(p_i,p_j)}$ with~$i<j<i+t$ be a sketchable pocket.
    \begin{enumerate}
        \item If~$p_j\notin\E_Q(v)$, then~$v=c_{j-1}$ or there exists a triangle in~$Q$ with vertices~$p_a,p_b,p_j$ and~$i\leq a<b<j$ such that for pocket~$Q'=Q_{(p_a,p_b)}$,~$\E_{Q'}(v)\neq T^+_{Q'}$.
        \item If~$p_i\notin\E_Q(v)$, then~$v=c_{i+1}$ or there exists a triangle in~$Q$ with vertices~$p_i,p_a,p_b$ and~$i<a<b\leq j$ such that for pocket~$Q'=Q_{(p_a,p_b)}$,~$\E_{Q'}(v)\neq T^+_{Q'}$.
    \end{enumerate}
\end{lemma}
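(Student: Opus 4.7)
My plan is to prove part~1 by structural induction on the pocket $Q$; part~2 then follows by a symmetric argument swapping the roles of $p_i$ and $p_j$ (and of the two child pockets). The first step is to reduce the hypothesis $p_j\notin\E_Q(v)$ to $p_j\notin\L_Q(v)$: inspecting the three clauses of the definition of $\E_Q$, the hypothesis rules out the clause $\E_Q(v)=T_Q^+$ (since $p_j$ is a vertex of $T_Q^+$), and in each of the remaining two clauses it directly yields $p_j\notin\L_Q(v)$, using that $p_j\in e_Q$. So the remainder of the argument only needs to analyze where $\L_Q(v)$ can land without containing $p_j$.

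For the base case, $Q$ is trivial, so $j=i+1$ and $c_{j-1}=c_i$. A direct calculation from the definition of $\L_Q$ on a trivial pocket shows that $p_j\in\E_Q(v)$ for every vertex except $v=c_i$, matching the first alternative. For the inductive step, let $p_m$ be the third vertex of $T_Q$ with $i<m<j$, and let $L=Q_{(p_i,p_m)}$ and $R=Q_{(p_m,p_j)}$ be the two child pockets, so that $T_L^+=T_R^+=T_Q$. I case-split on the clause that defines $\L_Q(v)$. Whenever the derivation of $p_j\notin\L_Q(v)$ factors through $p_j\notin\E_R(v)$, namely when $\L_Q(v)=\E_R(v)$, or when $\L_Q(v)=\E_L(v)\cap\E_R(v)$ and $p_j\notin\E_R(v)$, I invoke the inductive hypothesis on the smaller (and, by sketchability of $Q$, still sketchable) pocket $R$ to obtain either $v=c_{j-1}$ or a triangle with vertices $p_a,p_b,p_j$ inside $R\subseteq Q$ (with $m\le a<b<j$) satisfying $\E_{Q'}(v)\ne T_{Q'}^+$.

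In the remaining sub-cases $p_j\notin\E_L(v)$, and the key observation is that, since $p_j$ is not a vertex of the region $L$, $p_j$ belongs to exactly one simplex in the range of $\E_L$, namely the outside triangle $T_L^+=T_Q$. Hence $p_j\notin\E_L(v)$ forces $\E_L(v)\ne T_L^+$, and setting $a=i$, $b=m$ the triangle $T_Q$ (with vertices $p_a,p_b,p_j$) inside $Q$ witnesses the second alternative for the pocket $Q'=L$. The main technical hurdle I anticipate is pinning down this observation about the range of $\E_L$ cleanly and carrying through the base-case calculation; once that bookkeeping is in hand, the case analysis itself is routine.
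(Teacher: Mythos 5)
Your proposal is correct and follows essentially the same route as the paper's own proof: induction on the size of the pocket, with the base case handled by inspecting $\E_Q$ on the trivial pocket, and the inductive step splitting on whether $p_j$ is missing from $\E_R(v)$ (recurse into $R$) or from $\E_L(v)$ (then $\E_L(v)\neq T_L^+$ since $p_j\in T_L^+=T_Q$, and $Q'=L$ with the triangle $p_i,p_m,p_j=T_Q$ gives the witness). The only cosmetic difference is that you make the intermediate reduction from $p_j\notin\E_Q(v)$ to $p_j\notin\L_Q(v)$ and the ensuing case split on the clauses of $\L_Q$ explicit, whereas the paper states the resulting dichotomy ``$\E_L(v)$ or $\E_R(v)$ does not contain $p_j$'' directly; the underlying reasoning is identical.
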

  \begin{proof}
    Statements (1) and (2) can be proved using symmetric arguments, so we prove only statement (1).
    We proceed by induction on the size of~$Q$.
    If~$Q$ consists of the single edge~$(p_i,p_j)$, then~$i=j-1$, and if~$p_j\notin\E_Q(v)$, then~$v=c_i=c_{j-1}$.
    So assume that~$Q$ does not consist of a single edge.
    Let~$p_m$ with~$i<m<j$ be the third vertex of~$T_Q$ and let~$L=Q_{(p_i,p_m)}$ and~$R=Q_{(p_m,p_j)}$.
    If~$p_j\notin\E_Q(v)$, then~$\E_L(v)$ or~$\E_R(v)$ does not contain~$p_j$.
    If~$p_j\notin\E_R(v)$, then by induction we are done.
    So assume that~$p_j\in\E_R(v)$ and hence that~$p_j\notin\E_L(v)$.
    This means that~$\E_L(v)\neq T^+_L$, completing the proof.
  \end{proof}

Lemma~\ref{lem:pulledTriangle} and~\ref{lem:doublePulled} relate the characterization in Lemma~\ref{lem:pulledPoint}, which uses the well-behaved sketch $\E_Q$, to the requirements on graph and cycle distances expressed in Definition~\ref{def:pulled}. Lemma~\ref{lem:pulledTriangle} covers the first condition of Definition~\ref{def:pulled}, while Lemma~\ref{lem:doublePulled} covers the remainder.

\begin{lemma}
    Assume that pocket~$Q=Q_{(p_i,p_j)}$ with~$i<j<i+t$ admits a sketch.
    If~$\E_Q(v)\neq T^+_Q$, then~$v$ is pulled by~$Q$.
    \label{lem:pulledTriangle}
\end{lemma}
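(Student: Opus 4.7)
The plan is to prove the claim by induction on the number of triangles in the pocket $Q$, relying on an auxiliary monotonicity observation and on Lemma~\ref{lem:pulledPoint} to drive the recursion. The monotonicity observation is as follows: if $v$ is pulled by a sub-pocket $Q' = Q_{(p_a,p_b)}$ of $Q$ with $i \leq a$ and $b \leq j$, then $v$ is pulled by $Q$ as well, since the bounds in Definition~\ref{def:pulled} only relax when $a$ is replaced by $i$ and $b$ by $j$. The base case, when $Q$ consists of a single edge $(p_i,p_{i+1})$, follows immediately from the trivial-pocket formula for $\E_Q$: the only vertices with $\E_Q(v) \neq T^+_Q$ are $c_i$ and $c_{i+1}$, and each satisfies condition~(1) of Definition~\ref{def:pulled} with distance zero.

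For the inductive step, let $p_m$ be the third vertex of $T_Q$ and let $L,R$ be the child pockets of $Q$. I distinguish cases based on which endpoints of $e_Q$ lie in $\E_Q(v)$. If $p_j \notin \E_Q(v)$, Lemma~\ref{lem:pulledPoint}(1) yields either $v = c_{j-1}$, which directly satisfies condition~(1) with $k = j-1$, or a strictly smaller sub-pocket $Q'$ in which $v$ remains stuck; the inductive hypothesis combined with monotonicity then gives $v$ pulled by $Q$. The case $p_i \notin \E_Q(v)$ is symmetric, via Lemma~\ref{lem:pulledPoint}(2).

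The remaining case is $\{p_i,p_j\} \subseteq \E_Q(v)$ with $\E_Q(v) \neq T^+_Q$: the definition of $\E_Q$ forces the existence of a neighbor $u$ of $v$ with $\L_Q(u) \cap e_Q = \emptyset$, so in particular $p_i,p_j \notin \E_Q(u)$. Applying Lemma~\ref{lem:pulledPoint}(1) and~(2) to $u$, the degenerate subcases $u \in \{c_{i+1},c_{j-1}\}$ are handled directly (a cycle vertex adjacent to $v$ gives condition~(1) for $v$ by inspection), and otherwise the lemma produces sub-pockets $Q'_R$ with $b_R \leq j-1$ and $Q'_L$ with $a_L \geq i+1$ to which induction applies. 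Monotonicity translates these into pulls on $u$ by $Q$ from both sides, and the triangle inequality $d_G(v,c) \leq d_G(u,c)+1$ then yields the two bounds of condition~(2) for $v$ with respect to $Q$.

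The hard part will be this last case: a naive application of the triangle inequality produces bounds that look off by one from condition~(2), and the remedy is to verify that Lemma~\ref{lem:pulledPoint} can always select triangles so that the sub-pockets $Q'_R, Q'_L$ place their pulling cycle vertices into the ranges $\{i,\dots,j-1\}$ and $\{i+1,\dots,j\}$ respectively, with the slack direction of the resulting bound aligned with the slack direction of condition~(2). Making this alignment precise across all configurations of $\L_Q(u) = \E_L(u) \cap \E_R(u)$ (and its two degenerate alternatives in the definition of $\L_Q$) is where the bulk of the technical work in the proof will live.
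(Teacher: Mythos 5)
Your proof plan tracks the paper's own quite closely: the base case, the monotonicity observation (used implicitly in the paper when passing a pull from $L$ or $R$ up to $Q$), and the two asymmetric cases $p_j\notin\E_Q(v)$ and $p_i\notin\E_Q(v)$ (via Lemma~\ref{lem:pulledPoint}, induction, and monotonicity) are all correct and essentially match the paper. You also correctly reduce the remaining case to the existence of a neighbor $u$ with $\L_Q(u)\cap e_Q=\emptyset$ (hence $p_i,p_j\notin\E_Q(u)$), and you correctly identify that a naive triangle inequality is off by one. However, the proposal stops exactly at the point where the actual work lies, and the ``alignment'' remedy you sketch does not close the gap.

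Here is where the gap is concrete. When you apply Lemma~\ref{lem:pulledPoint}(1) and (2) to $u$ at the level of $Q$, the returned subpockets need not both carry a strict boost. Concretely, Lemma~\ref{lem:pulledPoint}(1) guarantees only $i\leq a_R<b_R<j$ (so $a_R$ may equal $i$), and Lemma~\ref{lem:pulledPoint}(2) only $i<a_L<b_L\leq j$ (so $b_L$ may equal $j$). In the problematic configuration $p_j\in\E_R(u)$ and $p_i\in\E_L(u)$, the lemma hands you back $Q'_R=L=Q_{(p_i,p_m)}$ and $Q'_L=R=Q_{(p_m,p_j)}$, each with a single-sided boost, and then the triangle inequality genuinely produces bounds that exceed condition~(2) by one. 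This configuration is not a corner case you can wave away; together with $\E_L(u)\cap\E_R(u)=\L_Q(u)=p_m$ it forces $\E_L(u)=e_L$ and $\E_R(u)=e_R$, which is precisely what the paper isolates as the ``edge case.'' Resolving it requires two ingredients that do not appear in your proposal: first, the observation that $\E_L(u)=e_L$ forces $L$ to be non-trivial so $m\geq i+2$, and symmetrically $m\leq j-2$; second, a three-way case split on whether $u$ is pulled by $L$ (resp.\ $R$) via condition~(1) or condition~(2) of Definition~\ref{def:pulled}, where you must pick out the correct slack-direction half of condition~(2) (the $\min(l-i-1,m-l+1)$ half for $L$, the $\min(k-m+1,j-k-1)$ half for $R$) so that the spare $\pm 2$ coming from $m\in[i+2,j-2]$ cancels the $+1$ from $d_G(v,u)\leq 1$. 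Without that quantitative step, the argument does not go through; saying the bulk of the technical work ``will live'' there is accurate, but it is the part of the proof that is actually being asked for.

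One smaller point: in your corner subcases (where pulledPoint recurses into a proper descendant of $L$ or $R$) you do in fact get $i<a<b<j$ on both sides, so the triangle inequality closes as you claim; the proposal would be strengthened by making explicit that the single-sided-boost situation arises only when both lemmas return exactly $L$ and $R$, i.e.\ precisely when $\E_L(u)=e_L$ and $\E_R(u)=e_R$.
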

\begin{proof}
    If~$\E_Q(v)\neq T^+_Q$, then either~$e_Q\not\subseteq\L_Q(v)$ or for some neighbor~$u$ of~$v$,~$\L_Q(u)$ does not intersect $e_Q$.
    We proceed by induction on the size of~$Q$. 
    
    \subparagraph{\boldmath $Q$ is trivial}
    If~$Q$ is trivial, it consists of one edge~$e_Q$, and~$\L_Q(u)$ intersects $e_Q$ for all~$u$.
    By assumption that $\E_Q(v)\neq T^+_Q$, we have $e_Q\not\subseteq\L_Q(v)$ by definition of $\E_Q$, so~$v=c_i$ or $v=c_j$.
    Hence,~$d_G(v,c_k)=0\leq\min\{k-i,j-k\}=0$ for some~$k\in\{i,j\}$, so~$v$ is pulled by~$Q$.

        \subparagraph{\boldmath $Q$ is non-trivial}
        Next, suppose that $Q$ is non-trivial, and thus contains the triangle $T_Q$.
        Let $p_m$ (with~$i<m<j$) be the third vertex of~$T_Q$ and let~$L=Q_{(p_i,p_m)}$ and~$R=Q_{(p_m,p_j)}$ be the two subpockets of~$Q$ with~$T^+_L=T^+_R=T_Q$.
    If~$e_Q\not\subseteq\L_Q(v)$, then~$T_Q\neq\E_L(v)$ or~$T_Q\neq\E_R(v)$, so by induction~$v$ is pulled by~$L$ or~$R$, and hence also by~$Q$.
    So assume that~$e_Q\subseteq\L_Q(v)$ and there exists some neighbor~$u$ of~$v$ for which~$\L_Q(u)$ does not intersect $e_Q$.
    It follows by construction that~$\E_L(v)=\E_R(v)=T_Q$.
    Because~$u$ is assigned to a simplex of the same triangle as~$v$, we have~$\E_L(u)\subseteq T_Q$ and~$\E_R(u)\subseteq T_Q$, so~$\L_Q(u)\subseteq T_Q$.
    Since~$\L_Q(u)$ does not intersect $e_Q$, we have~$\L_Q(u)=p_m$ and hence~$\E_L(u)\cap\E_R(u)=p_m$.
    So either~$\E_L(u)=e_L$ and~$\E_R(u)=e_R$, or~$\E_L(u)$ or~$\E_R(u)$ is~$p_m$.
    We consider these cases separately.

    \subparagraph{Edge case}
    Suppose that~$\E_L(u)=e_L$ and~$\E_R(u)=e_R$.
    Then~$m\in[i+2,j-2]$ and by induction~$u$ is pulled by both~$L$ and~$R$.
    We distinguish three cases depending on what causes~$u$ to be pulled by $L$ and $R$, and show in each case that $v$ is pulled by $Q$.
    \begin{enumerate}
        \item If there exists some~$l\in[i,m]$ with~$d_G(u,c_l)\leq\min(l-i-1,m-l+1)$, then 
            \begin{align*}
                d_G(v,c_l)
                &\leq d_G(u,c_l)+1\\
                &\leq\min(l-i-1,m-l+1)+1\\
                &\leq\min(l-i,j-l)\text{,}
            \end{align*}
            so~$v$ is pulled by~$Q$.
        \item Symmetrically, $v$ is pulled by $Q$ if~$d_G(u,c_k)\leq\min(k-m+1,j-k-1)$ for some~$k\in[m,j]$.
        \item In the remaining case, there exist~$k\in[i,m]$ and~$l\in[m,j]$ with~$d_G(u,c_k)\leq\min(k-i,m-k)$ and~$d_G(u,c_l)\leq\min(l-m,j-l)$.
            Therefore
            \begin{align*}
                d_G(v,c_k)
                &\leq d_G(u,c_k)+1\\
                &\leq\min(k-i,m-k)+1\\
                &\leq\min(k-i+1,m-k+1)\\
                &\leq\min(k-i+1,j-k-1)
                \end{align*}
            and symmetrically~$d_G(v,c_l)\leq\min(l-i-1,j-l+1)$, so~$v$ is pulled by~$Q$.
    \end{enumerate}
    \subparagraph{Corner case}
  Assume that~$\E_L(u)=p_m$ (the case~$\E_R(u)=p_m$ is symmetric).
  Then~$p_i\notin\E_L(u)$, so by Lemma~\ref{lem:pulledPoint}, we have either~$u=c_{i+1}$ or there exists some pocket~$Q'=Q_{(p_a,p_b)}$ with~$i<a<b\leq m$ such that~$\E_{Q'}(u)\neq T^+_{Q'}$.
  If~$u=c_{i+1}$, then for~$k=i+1$ we have~$d_G(v,c_k)\leq d_G(u,c_k)+1\leq 1\leq\min(k-i,j-k)$ in which case~$v$ is pulled by~$Q$.
  Otherwise,~$u$ is by induction pulled by some pocket~$Q_{(p_a,p_b)}$ with~$i<a<b\leq m<j$, and since~$d_G(v,u)\leq 1$, the triangle inequality shows that~$v$ is pulled by~$Q_{(p_i,p_j)}$.

  By induction, $v$ is pulled by $Q$ whenever $\E_Q(v)\neq T^+_Q$.
\end{proof}
\begin{restatable}[$\star$]{lemma}{lemDoublePulled}
  \label{lem:doublePulled}
  Assume that pocket~$Q=Q_{(p_i,p_j)}$ with~$i<j<i+t$ admits a sketch.
  If~$v$ is pulled by~$Q$ but there exists no~$k\in[i,j]$ such that~$d_G(v,c_k)\leq\min(k-i,j-k)$, then there exist~$k,o,l\in[i,j]$ such that~$k<o<l$ and a vertex~$x\neq v$ with~$d_G(x,c_k)\leq\min(k-i,o-k)\leq\min(k-i+1,j-k-1)-d_G(x,v)$ and~$d_G(x,c_l)\leq\min(l-o,j-l)\leq\min(l-i-1,j-l+1)-d_G(x,v)$.
\end{restatable}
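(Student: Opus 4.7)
The plan is to proceed by induction on the size of the pocket $Q$ (measured by the number of triangles it contains), mirroring the case analysis in the proof of Lemma~\ref{lem:pulledTriangle}. In the base case $Q$ is a single edge $(p_i,p_{i+1})$: any vertex pulled by such a $Q$ must be $c_i$ or $c_{i+1}$, which already satisfies condition~(1) of Definition~\ref{def:pulled} with $k=i$ or $k=i+1$; so the hypothesis of the lemma is violated and the implication is vacuous.

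For the inductive step, write $T_Q=(p_i,p_m,p_j)$ with child pockets $L=Q_{(p_i,p_m)}$ and $R=Q_{(p_m,p_j)}$. Because condition~(1) fails for $v$ and $v$ is pulled, condition~(2) of Definition~\ref{def:pulled} supplies indices $k,l\in[i,j]$ satisfying the two asymmetric bounds. One checks $k\ne l$ (otherwise combining both bounds gives $d_G(v,c_k)\le\min(k-i,j-k)$, contradicting the hypothesis) and by the symmetry of the two sub-conditions we may assume $k<l$. I then split into a \emph{same-side} case (both $k,l\in[i,m]$ or both $\in[m,j]$) and a \emph{straddling} case ($k\le m\le l$). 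In the same-side case, say $k,l\in[i,m]$, the bounds remain valid with $j$ replaced by $m$, so $v$ is pulled by $L$ via condition~(2); condition~(1) for $v$ with respect to $L$ must also fail, since the $L$-bound $\min(k'-i,m-k')$ is at most the $Q$-bound $\min(k'-i,j-k')$. The inductive hypothesis applied to $L$ then yields witnesses $x,k',o',l'\in[i,m]\subseteq[i,j]$, which remain valid for $Q$ because replacing $m$ by $j$ only enlarges the right-hand sides of the desired inequalities.

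The straddling case $k\le m\le l$ is the heart of the argument and parallels the ``edge case, subcase~3'' of the proof of Lemma~\ref{lem:pulledTriangle}. I set $o=m$ and construct $x$ as the vertex at distance $d_G(v,c_k)-\min(k-i,m-k)$ from $v$ on a shortest $v$-to-$c_k$ path, so that $d_G(x,c_k)=\min(k-i,m-k)$; this distance is non-negative by the case~(2) bound. A triangle-inequality computation using $k<m<l$ together with the case~(2) bound on $d_G(v,c_l)$ then gives $d_G(x,c_l)\le\min(l-m,j-l)$. The two tightness inequalities linking these bounds to the case~(2) bounds via $d_G(x,v)$ follow by direct substitution. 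The main obstacle is the bookkeeping of inequalities: the asymmetry between the bounds on $c_k$ (with $+1/-1$) and on $c_l$ (with $-1/+1$) has to be threaded carefully, and the borderline situations $m\in\{i+1,j-1\}$ (where the $m\in[i+2,j-2]$ restriction from Lemma~\ref{lem:pulledTriangle} fails) require a separate argument that mimics its corner case. Finally, $x\ne v$ follows from $d_G(x,c_k)<d_G(v,c_k)$, which in turn comes from condition~(1) failing at the chosen~$k$.
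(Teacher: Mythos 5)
Your base case and the overall inductive framing are fine, but both of your inductive sub-cases have gaps, and they point to a common underlying problem: you try to extract the witness $x$ purely from the numerical bounds in Definition~\ref{def:pulled}, without ever using the hypothesis that $Q$ admits a sketch, whereas the paper's proof re-uses the structural case analysis behind Lemma~\ref{lem:pulledTriangle} to locate $x$.

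\textbf{Same-side case.} You claim that if $k,l\in[i,m]$ then ``the bounds remain valid with $j$ replaced by $m$,'' so that $v$ is pulled by $L=Q_{(p_i,p_m)}$ via condition~(2). But replacing $j$ by $m<j$ makes the right-hand sides \emph{smaller}, not larger: $\min(k-i+1,m-k-1)\leq\min(k-i+1,j-k-1)$ and $\min(l-i-1,m-l+1)\leq\min(l-i-1,j-l+1)$. Knowing that $v$ satisfies the $Q$-bounds therefore does not give the $L$-bounds, so you cannot conclude $v$ is pulled by $L$ and the inductive step does not go through.

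\textbf{Straddling case.} You set $o=m$, choose $x$ on a shortest $v$-to-$c_k$ path at distance $d_G(v,c_k)-\min(k-i,m-k)$ from $v$, and then invoke the triangle inequality to bound $d_G(x,c_l)$. This only yields $d_G(x,c_l)\leq d_G(x,v)+d_G(v,c_l)$, which overshoots. Take $i=0$, $j=10$, $m=5$, $k=2$, $l=8$: condition~(2) allows $d_G(v,c_2)\leq 3$ and $d_G(v,c_8)\leq 3$, and failure of condition~(1) forces equality, so $d_G(v,c_2)=d_G(v,c_8)=3$. Then $d_G(x,v)=1$, giving only $d_G(x,c_8)\leq 4$, whereas the lemma requires $d_G(x,c_l)\leq\min(l-m,j-l)=2$. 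Nothing about a shortest path to $c_k$ controls the distance to $c_l$, so the construction is not valid.

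The paper avoids both problems by not working from the numerical bounds at all. Since $\E_Q(v)\neq T_Q^+$, the proof of Lemma~\ref{lem:pulledTriangle} exhibits a specific \emph{neighbor} $u$ of $v$ that is either (a) pulled by both $L$ and $R$ via condition~(1), in which case $x=u$, $o=m$ work directly (with $d_G(x,v)=1$), or (b) pulled by a strictly interior subpocket $Q_{(p_a,p_b)}$ with $i<a<b<j$, in which case the inductive hypothesis is applied to that subpocket and the resulting witness carries over because $a>i$ and $b<j$ slacken the bounds by exactly one. To repair your argument you would need to recover this structural information — that the obstruction propagates through a neighbor of $v$ or through a strictly interior subpocket — rather than reasoning from Definition~\ref{def:pulled} alone.
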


Lemma~\ref{lem:arcPocketSketch} ties together our preceding arguments to show that if a pocket has no sketch, then the \ref{cond:banana} Condition is violated. This directly implies Corollary~\ref{cor:bananaPocketSketch}.

\begin{restatable}[$\star$]{lemma}{lemArcPocketSketch}
  \label{lem:arcPocketSketch}
  If pocket~$Q=Q_{(p_i,p_j)}$ with~$i<j<i+t$ has no sketch, then there exist~$i\leq k\leq l\leq j$ such that~$d_G(c_k,c_l)<d_C(c_k,c_l)$.
\end{restatable}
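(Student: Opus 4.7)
The plan is a structural induction on the pocket~$Q$. For the base case, a trivial pocket consists of a single polygon edge and always admits the sketch~$\L_Q$, so the hypothesis of the lemma is vacuous. For the inductive step, let~$p_m$ be the third vertex of~$T_Q$ and let $L = Q_{(p_i,p_m)}$ and $R = Q_{(p_m,p_j)}$ denote the two child pockets. If either~$L$ or~$R$ fails to be sketchable, the inductive hypothesis directly supplies a pair of indices in $[i,m] \subseteq [i,j]$ or $[m,j] \subseteq [i,j]$ that witnesses a Pair Condition violation, so it suffices to treat the remaining case.

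The substantive case is when both child pockets are sketchable but $\L_Q$ is undefined at some vertex~$v$, so that $\E_L(v) \cap \E_R(v) = \emptyset$ with neither $\E_L(v)$ nor $\E_R(v)$ equal to~$T_Q$. By Lemma~\ref{lem:pulledTriangle}, $v$ is then pulled by both~$L$ and~$R$. My plan is to translate these two pulls into a cycle vertex~$c_k$ with $k \in [i,m]$ that pulls~$v$ from the left and a cycle vertex~$c_l$ with $l \in [m,j]$ that pulls~$v$ from the right, and then to apply the triangle inequality $d_G(c_k,c_l) \le d_G(v,c_k) + d_G(v,c_l)$, routing instead through the auxiliary vertex~$x$ supplied by Lemma~\ref{lem:doublePulled} when condition~2 of Definition~\ref{def:pulled} is needed in place of condition~1. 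The resulting sum of two $\min$-bounds is at most the pocket-arc length~$l - k$, which equals~$d_C(c_k,c_l)$ whenever $l-k \le t/2$.

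The main obstacle is upgrading this~$\le$ into the strict~$<$ required for a genuine Pair Condition violation: the naive combination of condition~1 on both sides collapses to $(m-k) + (l-m) = l - k$, merely matching the arc distance. I expect the missing unit of slack to come from the $\pm 1$ offsets in condition~2 of Definition~\ref{def:pulled} and in Lemma~\ref{lem:doublePulled}. The disjointness $\E_L(v) \cap \E_R(v) = \emptyset$ is the driver here: since $L \cap R = \{p_m\}$, it forces $p_m \notin \E_L(v)$ or $p_m \notin \E_R(v)$, so by Lemma~\ref{lem:pulledPoint} the vertex~$v$ is either a near-lid cycle vertex $c_{m \pm 1}$ (in which case the unit of strictness is immediate) or is pulled by a strictly smaller sub-pocket, which is precisely the regime in which the auxiliary vertex~$x$ of Lemma~\ref{lem:doublePulled} carries a shortcut that beats~$l - k$ by at least one. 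A secondary subtlety is that when $j - i > t/2$ the value $d_C(c_k,c_l)$ may be the complementary arc rather than $l-k$; I would handle this by verifying that the witness pair straddling~$p_m$ can always be confined to whichever of the two halves $[i,m]$ and $[m,j]$ is smaller, keeping $l - k \le t/2$ and placing us back in the regime where the bound is immediate.
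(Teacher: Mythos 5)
Your overall strategy is the same as the paper's: induction on the pocket, the child pockets $L$ and $R$ handled by the inductive hypothesis, Lemma~\ref{lem:pulledTriangle} to establish that $v$ is pulled by both children, the disjointness $\E_L(v)\cap\E_R(v)=\emptyset$ forcing $p_m\notin\E_L(v)$ or $p_m\notin\E_R(v)$, Lemma~\ref{lem:pulledPoint} on that side, and the triangle inequality through $v$. You also correctly pinpoint that the unit of strictness comes from Lemma~\ref{lem:pulledPoint}: either $v=c_{m\mp 1}$ directly, or $v$ is pulled by a strictly smaller subpocket $Q_{(p_a,p_b)}$ with $b<m$, and that strict inequality $b<m$ shifts the bound from $\min(m-k,\ldots)$ to $\min(m-k-1,\ldots)$ or $\min(m-k-2,\ldots)$.

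Two corrections to how you intend to finish, though. First, Lemma~\ref{lem:doublePulled} and its auxiliary vertex $x$ are not needed here; that machinery belongs to the proof of Theorem~\ref{thm:bananaNinjaSketch}. What you need is simply Lemma~\ref{lem:pulledPoint} (for the improved bound on one side) together with a direct case split over conditions 1 and 2 of Definition~\ref{def:pulled} for the pull by the subpocket and by $R$. Second, and more substantively, your proposed fix for the complementary-arc issue does not work: you cannot ``confine the witness pair to one half'' and still straddle $p_m$ — the whole argument needs one witness in $[i,m]$ and one in $[m,j]$ so that the triangle inequality through $v$ captures both pulls. The correct resolution is simpler and requires no confinement. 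Since $Q$ is a proper pocket, at least one polygon vertex lies outside it, so $j-i\le t-2$. Combining the bounds from $L$ and $R$, one obtains something of the form
\[
d_G(c_k,c_l)\;\le\;\min\bigl(l-k-1,\;(k-i)+(j-l)\bigr)\;\le\;\min\bigl(l-k-1,\;(t-2)-(l-k)\bigr),
\]
while $d_C(c_k,c_l)=\min\bigl(l-k,\;t-(l-k)\bigr)$. Each term on the left is strictly smaller than the corresponding term on the right, so $d_G(c_k,c_l)<d_C(c_k,c_l)$ holds regardless of which arc attains the cycle distance. You should discard the ``confine to a half'' idea and carry the $j-i\le t-2$ bound through the $\min$ instead.
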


\begin{corollary}
  \label{cor:bananaPocketSketch}
  If the \ref{cond:banana} Condition is satisfied, then all pockets have a sketch.
\end{corollary}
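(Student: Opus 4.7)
The plan is to derive Corollary~\ref{cor:bananaPocketSketch} directly as the contrapositive of Lemma~\ref{lem:arcPocketSketch}. Since the heavy lifting (combining Lemmas~\ref{lem:pulledPoint},~\ref{lem:pulledTriangle}, and~\ref{lem:doublePulled} into a witness of violation) has already been done in the statement of Lemma~\ref{lem:arcPocketSketch}, essentially no new argument is required; the corollary is just a repackaging.

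Concretely, I would argue by contradiction. Suppose the \ref{cond:banana} Condition holds but some pocket $Q=Q_{(p_i,p_j)}$ with $i<j<i+t$ fails to admit a sketch. By Lemma~\ref{lem:arcPocketSketch}, there then exist indices $i\leq k\leq l\leq j$ with $d_G(c_k,c_l)<d_C(c_k,c_l)$. However, the \ref{cond:banana} Condition asserts $d_C(c_k,c_l)\leq d_G(c_k,c_l)$ for every pair of cycle vertices, contradicting the inequality produced by the lemma. Hence every non-trivial pocket is sketchable; trivial pockets are sketchable by the unconditional definition of $\L_Q$ in the base case (cf.\ Lemma~\ref{lem:sketchIfDefined}), so the claim holds for all pockets.

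One small bookkeeping point worth mentioning: Lemma~\ref{lem:arcPocketSketch} is phrased for pockets whose lid has indices $i<j<i+t$, which is precisely the generality in which pockets are defined in Section~\ref{sec:triangulation}, so no additional case analysis (e.g., for ``wrap-around'' pockets) is needed. The only thing to be careful about is that Lemma~\ref{lem:arcPocketSketch} is marked with ($\star$), so its conclusion is assumed; given that, the corollary is essentially a one-line deduction and there is no real obstacle.
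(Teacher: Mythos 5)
Your argument is correct and is exactly the paper's reasoning: the corollary is stated there as a direct consequence (contrapositive) of Lemma~\ref{lem:arcPocketSketch}, whose violated inequality $d_G(c_k,c_l)<d_C(c_k,c_l)$ contradicts the \ref{cond:banana} Condition. Your extra remark about trivial pockets is harmless but unnecessary, since Lemma~\ref{lem:arcPocketSketch} already covers single-edge pockets.
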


We now established that the \ref{cond:banana} Condition implies that each pocket has a sketch. If additionally the \ref{cond:ninja} Condition is satisfied, then Theorem~\ref{thm:bananaNinjaSketch} shows that we can combine the sketches for the three pockets, whose lids are the edges of the root triangle $\Tr$, to obtain a sketch, and hence a triangulation-respecting drawing, for $(G,C)$.
    
\begin{restatable}[$\star$]{theorem}{thmPolygonUniversal}
  \label{thm:bananaNinjaSketch}
  Let $(G,C)$ be an instance that satisfies the \ref{cond:banana} and \ref{cond:ninja} Conditions.
  Then $(G,C)$ has a triangulation-respecting drawing for any triangulation $\T$ of any simple polygon $P$.
\end{restatable}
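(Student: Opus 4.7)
The plan is to reduce the theorem to checking that the root-level combining sketch $\Delta$ is defined: by Corollary~\ref{cor:definedTriangulation} and Lemma~\ref{lem:sketchIfDefined}, once $\Delta$ is defined it gives a sketch for $\T$, which corresponds to a triangulation-respecting drawing via the framework of Section~\ref{sec:triangulation}. Under the Pair Condition, Corollary~\ref{cor:bananaPocketSketch} already ensures that every pocket admits a sketch, so in particular $\E_A, \E_B, \E_C$ are defined for the three pockets $A, B, C$ whose lids are the edges of $\Tr$.

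Write $p_\alpha, p_\beta, p_\gamma$ for the vertices of $\Tr$ in cyclic order on $P$. The pockets $A, B, C$ have pairwise disjoint interiors and meet only at vertices of $\Tr$, and each such vertex lies on the lids of only two of the three pockets; consequently the triple intersection $\E_A(v) \cap \E_B(v) \cap \E_C(v)$ is automatically empty whenever all three values are proper. Inspecting the equation for $\Delta$, its undefinedness at some vertex $v$ reduces to two cases: (a)~all three of $\E_A(v), \E_B(v), \E_C(v)$ are proper (not equal to $\Tr$), or (b)~exactly two (say $\E_A(v)$ and $\E_B(v)$) are proper and the shared lid-vertex $p_\beta$ is missing from their intersection. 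I suppose for contradiction that some vertex $v$ realizes (a) or (b), and aim to contradict the Triple Condition applied to a suitable triple of cycle vertices.

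In case~(a), Lemma~\ref{lem:pulledTriangle} yields that $v$ is pulled by each of $A, B, C$. Using Lemma~\ref{lem:doublePulled} to replace Case~2 pullings by Case~1 pullings at a nearby vertex $x$ (absorbing the additive $d_G(x, v)$ into the slack permitted by the shifted Case~2 bounds), I obtain for each pocket $X \in \{A, B, C\}$ a cycle-vertex witness $c_{m_X}$ on the arc of $X$ with $d_G(v, c_{m_X})$ at most half the length of that arc. Since the three arcs of $\Tr$ partition the cycle, summing gives
\[
d_G(v, c_{m_A}) + d_G(v, c_{m_B}) + d_G(v, c_{m_C}) \le t/2.
\]
After handling the degenerate situations where some $m_X$ coincides with a vertex of $\Tr$, the witnesses $c_{m_A}, c_{m_B}, c_{m_C}$ split the cycle into three nonzero arcs, and the Triple Condition applied to this triple gives $d_G(v, c_{m_A}) + d_G(v, c_{m_B}) + d_G(v, c_{m_C}) > t/2$, a contradiction. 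In case~(b), Lemma~\ref{lem:pulledPoint} applied to the obstruction ``$p_\beta \notin \E_A(v)$'' (or the symmetric obstruction in $\E_B(v)$) produces either $v = c_{\beta - 1}$ or a strictly nested sub-pocket of $A$ that pulls $v$ with a cycle-vertex witness interior to $A$'s arc. Combining this strengthened pulling from $A$ with the pulling by $B$ and the Pair Condition (which pins down $d_G(c_i, c_j)$ to the cycle distance for all pairs of cycle vertices), I again obtain three cycle-vertex witnesses partitioning the cycle, and the same triple-summation contradicts the Triple Condition.

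The main obstacle is the careful bookkeeping around these reductions. The invocation of Lemma~\ref{lem:doublePulled} introduces an additive $d_G(x, v)$ in each distance bound that must be absorbed entirely by the $\pm 1$ shifts in the Case~2 pulling conditions; the integrality of graph distances (so that the strict inequality $> t/2$ in the Triple Condition means $\ge \lfloor t/2 \rfloor + 1$) is needed to ensure the slack cancels. Equally delicate is the handling of boundary cases where a pulling witness $c_{m_X}$ coincides with a vertex of $\Tr$, which requires either verifying that the collapsed triple still satisfies the Triple Condition or extracting a direct Pair-Condition violation from the coinciding witnesses.
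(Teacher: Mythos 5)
Your proposal shares the paper's high-level plan (argue that $\Delta$ is defined via the pulling lemmas, so that undefinedness would contradict the Pair or Triple Condition), but it misses the paper's crucial structural simplification: the root triangle $\Tr$ is chosen to be an \emph{ear} of the triangulation. With that choice, two of the three pockets $A$, $B$, $C$ are trivial (single polygon edges), so $\E_B(v)=\E_C(v)=\Tr$ for every vertex except the three ear-vertices. Undefinedness of $\Delta$ then forces $v=c_t$, the third ear-vertex, and only the single non-trivial pocket $A$ ever pulls. Consequently Lemma~\ref{lem:doublePulled} is invoked at most once, producing one auxiliary vertex $x$ and two witnesses $c_k,c_l$, and the triple $(c_k,c_l,c_t)$ together with $x$ supplies the Triple Condition violation directly.

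Your version attempts the argument for an arbitrary root triangle, and this is precisely where the gaps open up. In your case~(a), all three pockets pull $v$, and if any pulling is via the second case of Definition~\ref{def:pulled}, you claim you can "absorb the additive $d_G(x,v)$" and still obtain a single first-case-like witness $c_{m_X}$ per pocket with $d_G(v,c_{m_X})$ at most half the arc length. But Lemma~\ref{lem:doublePulled} does not give one witness per pocket: it gives a \emph{pair} $c_k,c_l$ plus an auxiliary vertex $x\neq v$ obeying shifted bounds, and it is $x$, not $v$, that the paper substitutes into the Triple Condition. If two or three pockets pull via the second case you could have distinct auxiliary vertices $x_A,x_B,x_C$, and your proposal never says which single vertex plays the role of the quantified $v$ in the Triple Condition; the distance arithmetic you sketch keeps tracking $v$, which is wrong in exactly that case. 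Case~(b) has the same unresolved issue. Moreover, even in the purely first-case situation, three witnesses drawn from three disjoint arcs do not automatically satisfy the Triple Condition's precondition that the pairwise cycle distances sum to $t$; when one pairwise cycle distance goes "the short way around" through the third arc, you must separately extract a Pair Condition violation, a step the proposal omits. The ear choice is the idea that collapses all of this to a single clean case, and without it the proposal as written does not close.
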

  
\begin{corollary}
  Let $(G,C,\emb)$ be a plane instance that satisfies the \ref{cond:banana} and \ref{cond:ninja} Conditions.
  Then $(G,C)$ has a drawing that accommodates $(\emb,\T)$ for any triangulation $\T$ of any simple polygon $P$.
\end{corollary}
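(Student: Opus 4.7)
The plan is to prove that the function $\Delta$ for the root triangle $\Tr$ is defined whenever both conditions hold; by Lemma~\ref{lem:sketchIfDefined} and Corollary~\ref{cor:definedTriangulation} this yields a sketch for $\T$, and any sketch induces a triangulation-respecting drawing of $(G,C)$ inside $P$. The setup is immediate from the Pair Condition: by Corollary~\ref{cor:bananaPocketSketch} every pocket is sketchable, so by Corollary~\ref{cor:definedPocket} the three root-pocket sketches $\E_A$, $\E_B$, $\E_C$ are all defined, where $A=Q_{(p_i,p_j)}$, $B=Q_{(p_j,p_k)}$, $C=Q_{(p_k,p_i)}$ with $i<j<k<i+t$ and $T_A^+=T_B^+=T_C^+=\Tr$. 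What remains is to merge these three sketches into $\Delta$.

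I will argue by contradiction: suppose some vertex $v$ has $\Delta(v)$ undefined. Unfolding the case definition of $\Delta$, this forces at least two of $\E_A(v),\E_B(v),\E_C(v)$ to be proper subsets of $\Tr$ and the three simplices to have empty common intersection. Lemma~\ref{lem:pulledTriangle} then tells me that $v$ is pulled by the corresponding pockets. For each such pocket $Q=Q_{(p_r,p_s)}$ I invoke case~1 of Definition~\ref{def:pulled} to obtain an index $m_Q\in[r,s]$ with $d_G(v,c_{m_Q})\le\min(m_Q-r,s-m_Q)$, or, when only case~2 applies, invoke Lemma~\ref{lem:doublePulled} to obtain a witness vertex $x_Q$ close to two cycle vertices in the lid of $Q$ together with a bound controlling $d_G(x_Q,v)$. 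When only two pockets pull $v$, say $A$ and $B$, the shared lid corner $p_j\in e_A\cap e_B$ must lie in neither $\E_A(v)$ nor $\E_B(v)$, since otherwise $\{p_j\}$ would be contained in the (assumed empty) triple intersection; Lemma~\ref{lem:pulledPoint} then descends this failure into strictly smaller subpockets whose pulled witnesses can still be harvested.

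The next step is to combine the per-pocket inequalities by triangle inequality, replacing graph distances between cycle vertices by their cyclic distances via the Pair Condition, to obtain upper bounds on $d_G(w,c_i)$, $d_G(w,c_j)$, $d_G(w,c_k)$ for a single vertex $w$ (either $v$ or a witness, shifted by $d_G(w,v)$). Summing these bounds, and using $\alpha+\beta+\gamma=t$ with $\alpha=j-i$, $\beta=k-j$, $\gamma=(i+t)-k$, the inequalities from the pulling pockets will yield $d_G(w,c_i)+d_G(w,c_j)+d_G(w,c_k)\le t/2$, contradicting the Triple Condition applied to the triple $(c_i,c_j,c_k)$, whose cyclic distances sum to $t$. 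Hence $\Delta$ is defined on every vertex, and the triangulation-respecting drawing exists.

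I expect the main obstacle to lie in the arithmetic of this final summation. The $\pm1$ shifts in case~2 of Definition~\ref{def:pulled} and the additional $d_G(x_Q,v)$ offset from Lemma~\ref{lem:doublePulled} must cancel cleanly against the contributions of the other pulling pockets so that the three bounds sum to exactly $t/2$. The subcase in which only two pockets pull $v$ is especially delicate, since one corner of $\Tr$ contributes ``for free''; its distance bound must be recovered by chasing the shared-corner argument through Lemma~\ref{lem:pulledPoint} into a smaller subpocket and then applying case~1 or case~2 of pulled there, all while keeping track of whether the extracted witness is $v$ itself or a neighbor at positive graph distance.
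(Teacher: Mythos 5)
There is a genuine gap: your argument proves the wrong conclusion. You establish (or rather, sketch a re-proof of Theorem~\ref{thm:bananaNinjaSketch}) that $\Delta$ is defined, hence that a sketch and therefore a triangulation-respecting drawing of $(G,C)$ exists. But the corollary asks for a drawing that \emph{accommodates} $(\emb,\T)$, i.e.\ a weakly-planar triangulation-respecting drawing that is planar and polygon-respecting after infinitesimal perturbation and, moreover, isotopic to the given planar drawing $\emb$. A triangulation-respecting drawing need not be planar at all --- indeed it can place several vertices at the same point --- so the existence of a sketch by itself does not yield the corollary. The entire point of Section~\ref{sec:planar}, culminating in Theorem~\ref{thm:planarSketchable}, is to bridge exactly this gap: a plane instance $(G,C,\emb)$ has an accommodating drawing if and only if it is sketchable, which in turn relies on the minimization machinery (Lemmas~\ref{lem:planar-triangulate}, \ref{lem:planar-sep-triangle}, \ref{lem:planar-contraction}, and \ref{lem:planar-minimal-drawing}). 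You never invoke any of this, so your chain of reasoning stops one step short of the stated conclusion. The intended proof of the corollary is a two-line application: Theorem~\ref{thm:bananaNinjaSketch} gives a sketch under the Pair and Triple Conditions, and Theorem~\ref{thm:planarSketchable} converts that sketch into an accommodating drawing for the plane instance.

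Separately, the re-derivation of Theorem~\ref{thm:bananaNinjaSketch} that occupies most of your writeup is both unnecessary (that theorem is already available to cite) and noticeably rougher than the paper's own. The paper chooses $\Tr$ to be an ear of the triangulation, which makes two of the three child pockets trivial and collapses the case analysis to the single vertex $c_t$ being pulled by the lone non-trivial pocket $A$. Your version keeps all three pockets general and explicitly flags the two-pockets-pulling subcase as ``especially delicate'' without resolving it; the arithmetic you defer (``I expect the main obstacle to lie in the arithmetic'') is precisely where the paper's ear trick avoids the difficulty.
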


\section{Discussion and Conclusion}     
 
We have characterized the (planar) polygon-universal graphs $(G,C)$ by means of simple combinatorial conditions involving (graph-theoretic) distances along the cycle $C$ and in the graph $G$.  In particular, this shows that, even though the recognition of polygon-universal graphs most naturally lies in~$\forall\exists \mathbb R$, it can in fact be tested in polynomial time, by explicitly checking the Pair and the Triple Conditions.
Our main open question concerns the restriction to simple polygons without holes.  Can a similar characterization be achieved in the presence of holes? Or is the polygon-universality problem for simple polygons with holes $\forall\exists \mathbb R$-complete?

Another interesting question concerns the running time for recognizing polygon-universal graphs.  Testing the Pair and Triple Conditions naively requires at least~$\Omega(n^3)$ time.  On the other hand, at least in the non-planar case, given $(G,C)$ and a polygon $P$ with arbitrary triangulation $T$, we can in linear time either find an extension or a violation of the Pair/Triple Condition, which shows that the instance is not polygon-universal. (Recall that a polygon-extension for $P$ might exist, though not one that respects $T$, see Figure~\ref{fig:nonrespecting}). For planar instances, the contraction to minimal instances causes an additional linear factor in the running time. Can (planar) polygon-universality be tested in $o(n^2)$ time?

\bibliography{flex}

\newpage

\appendix
\section{Omitted proofs from Section~\ref{sec:triangulation}}\label{app:Section3}

\lemSketchSimpleOutside*
\begin{proof}
        Let $\Gamma$ be an arbitrary sketch for pocket $Q$.
        To obtain a local sketch from $\Gamma$, we reassign all vertices that are assigned to a simplex that is not contained in $Q$ to the triangle $T^+_Q$ as follows.
        \[\Gamma^*(v):=\left\{\begin{array}{ll}
        T^+_Q      & \text{if $\Gamma(v)\not\subseteq Q$,}\\
        \Gamma(v)  & \text{otherwise.}
      \end{array}\right.\]
      By definition, the function places vertices either inside $Q$, or inside the triangle $T_Q^+$.
      Hence, to show that $\Gamma^*$ is a local sketch, it suffices to show that it is a sketch.
      
        Assume without loss of generality that $Q=Q_{(p_i,p_j)}$.
    For all vertices~$c_k$ with~$i\leq k\leq j$, we have~$\Gamma(c_k)=p_k\subseteq Q$.
    Hence,~$\Gamma^*(c_k)=p_k$ for all~$i\leq k\leq j$.
    To show that~$\Gamma^*$ is a sketch, it remains to show that for each edge~$(u,v)\in E$, the simplices~$\Gamma^*(u)$ and~$\Gamma^*(v)$ are simplices of a common triangle.
    We consider three cases based on whether $\Gamma(u)$ and $\Gamma(v)$ are subsets of $Q$.
    
    \mypar{Both} $\Gamma(u)$ and $\Gamma(v)$ are subsets of $Q$. Then $\Gamma^*(u)=\Gamma(u)$ and $\Gamma^*(v)=\Gamma(v)$, and the property follows since $\Gamma$ is a sketch.
    
    \mypar{Neither} $\Gamma(u)$ and~$\Gamma(v)$ are not subsets of~$Q$. Then~$\Gamma^*(u)=\Gamma^*(v)=T_Q^+$ and the property holds.
    
    \mypar{One} $\Gamma(u)\subseteq Q$ and~$\Gamma(v)\not\subseteq Q$ (the other case is symmetric).
              Then since~$\Gamma(u)$ is a simplex of a triangle containing~$\Gamma(v)$, we have~$\Gamma(u)\subseteq e_Q$.
              So we have~$\Gamma^*(u)\subseteq e_Q$ and~$\Gamma^*(v)=T^+_Q$, and therefore~$\Gamma^*(u)$ and~$\Gamma^*(v)$ are both simplices of~$T^+_Q$.
  \end{proof}
  
  \lemGreedyTriangulation*
    \begin{proof}
    Let~$A$,~$B$, and~$C$ be the pockets whose lid is an edge of~$\Tr$.
    A sketch for the triangulation is automatically a sketch for the pockets~$A$,~$B$, and~$C$.
    So to show that~$\Delta$ is defined, it suffices to show that for each vertex~$v$ we have~$\Tr\subseteq(\E_B(v)\cap\E_C(v))\cup(\E_A(v)\cap\E_C(v))\cup(\E_A(v)\cap\E_B(v))$ or~$\E_A(v)\cap\E_B(v)\cap\E_C(v)\neq\emptyset$.
    Let $\Gamma$ be an arbitrary sketch for the triangulation.
    For well-behavedness, it suffices to show that~$\Gamma(v)\cap\Tr\subseteq\Delta(v)$.
    If~$\Gamma$ assigns~$v$ to a simplex not contained in~$B\cup C$ then it follows from Lemmas~\ref{lem:contractedSketch} and~\ref{lem:greedyPocket} that~$\E_B(v)=\Tr$ and~$\E_C(v)=\Tr$, in which case~$\Tr\subseteq\E_B(v)\cap\E_C(v)\subseteq(\E_B(v)\cap\E_C(v))\cup(\E_A(v)\cap\E_C(v))\cup(\E_A(v)\cap\E_B(v))$ and we are done.
    Similarly, we are done if~$\Gamma$ assigns~$v$ to a simplex not contained in~$A\cup C$ or~$A\cup B$.
    Hence~$\Gamma(v)=\Tr$ or~$\Gamma(v)$ is a vertex of~$\Tr$.
    If~$\Gamma(v)=\Tr$, then also~$\E_A(v)=\E_B(v)=\E_C(v)=\Tr$ since they are well-behaved. So assume that~$\Gamma(v)$ is a vertex of~$\Tr$.
    Let~$p=B\cap C$ be the vertex shared by pockets~$B$ and~$C$ and assume without loss of generality that~$\Gamma(v)=p$ (symmetric arguments apply to the other vertices).
    Then~$\E_A(v)=\Tr$ and~$p\in\E_B(v)$ and~$p\in\E_C(v)$, so~$p\in\E_A(v)\cap\E_B(v)\cap\E_C(v)\neq\emptyset$ and $\Delta$ is defined.
    Now it follows immediately from Lemma~\ref{lem:greedyPocket} that~$\Gamma(v)\cap\Tr\subseteq\Delta(v)$.
  \end{proof}

In the remainder of this section, we provide a linear-time algorithm to decide if $(G,C)$ has a triangulation-respecting drawing if one exists, and constructs such a drawing if so. The proof of Theorem~\ref{thm:AlgorithmComputeSketch} describes the algorithm and also analyses its running time. The rather technical correctness proof is encapsulated in the separate Lemma~\ref{lem:algorithmStep}.

\thmAlgorithmComputeSketch*
\begin{proof}
Observe that~$\Delta(v)$ is defined as an intersection of terms~$\E_Q(v)$ that do not assign~$v$ to~$T^+_Q$.
  So if~$\E_Q(v)\neq T^+_Q$ for some pocket~$Q$, then~$\Delta(v)\subseteq\E_Q(v)$ (if~$\Delta$ is defined).
  Based on this observation, we can assign vertices based on a set of pockets.
  Let~$\mathcal{Q}$ be a set of (non-nested) pockets that all have a sketch.
  We combine the sketches~$\E_Q$ with~$Q\in\mathcal{Q}$ by defining a function~$\E_\mathcal{Q}$ that assigns each vertex to a simplex of the triangulation, the empty set, or the entire polygon~$P$ (to indicate that~$\E_Q(v)=T^+_Q$ for all~$Q\in\mathcal{Q}$).
  \[\E_\mathcal{Q}(v)=\left\{\begin{array}{ll}
    P & \text{if~$\E_Q(v)=T^+_Q$ for all~$Q\in\mathcal{Q}$,}\\
    \bigcap_{Q\in\mathcal{Q}\text{ and }\E_Q(v)\neq T^+_Q}\E_Q(v) & \text{otherwise.}
  \end{array}\right.\]
  If the triangulation has a sketch and~$A$,~$B$ and~$C$ are the three pockets whose lids are edges of~$\Tr$, we can obtain~$\Delta$ from~$\E_{\{A,B,C\}}$ by replacing the value~$P$ by~$\Tr$ in the above definition.
  We compute~$\E_{\{A,B,C\}}$ recursively and maintain a table~$S$ with the property that~$S[v]=\E_{\mathcal{Q}}(v)$ for all~$v\in V$.
  For~$\mathcal{Q}=\emptyset$, we initialize~$S[v]:=\E_\mathcal{Q}(v)=P$ for all~$v\in V$ in~$\bigO(|V|)$ time.
  
  Otherwise we derive~$\E_\mathcal{Q}$ from the values~$S[v]=\E_\mathcal{Q'}(v)$ obtained for a set~$\mathcal{Q'}$ of smaller pockets that all have a sketch, or we deduce that the triangulation has no sketch.
  For this, let~$Q=Q_{(p_i,p_j)}$ be an arbitrary pocket of~$\mathcal{Q}$.
  If~$Q$ consists of a single edge, let~$\mathcal{Q'}=\mathcal{Q}\setminus\{Q\}$.
  Update~$S[c_i]:=S[c_i]\cap\{p_i\}$ and~$S[c_j]:=S[c_j]\cap\{p_j\}$.
  If~$S[c_i]=\emptyset$ or~$S[c_j]=\emptyset$, then the triangulation has no sketch (because if~$\Delta$ were defined, then~$\Delta(c_i)$ or~$\Delta(c_j)$ would be empty).
  Otherwise, we have in constant time updated~$S$ such that~$S[v]=\E_\mathcal{Q}(v)$.

  If instead~$Q$ does not consist of a single edge, we define~$\mathcal{Q'}$ as in Lemma~\ref{lem:algorithmStep} and update~$S$ accordingly.
  For the sake of analysis, we define for a vertex or edge~$x$ of the triangulation~$\mathit{num}(x)$ as the number of vertices~$v$ for which~$\E_\mathcal{Q'}(v)=x$, and let~$\mathit{deg}(p_m)$ be the total degree of vertices for which~$\E_\mathcal{Q'}(v)=p_m$.
  If for each simplex of the triangulation, we store the sets of vertices that~$S$ assigns to it in a doubly-linked list, then we can in~$\bigO(\mathit{deg}(p_m)+\mathit{num}(p_m)+\mathit{num}(e_L)+\mathit{num}(e_R))$ time update both~$S$ and the lists of the triangulation simplices.
  By Lemma~\ref{lem:algorithmStep}, we have~$S[v]=\E_\mathcal{Q}(v)$ if for all~$v$ we have~$S[v]\neq\emptyset$.
  If instead~$S[v]=\emptyset$ for some~$v$, then the triangulation has no sketch.
  
  The above procedure has~$\bigO(n)=\bigO(|V|)$ steps, one for each pocket.
  We claim that this procedure computes~$\E_{\{A,B,C\}}$ in~$\bigO(|V|+|E|)$ time.
  Each vertex is assigned to an edge of the triangulation at most once throughout the procedure.
  Moreover, for an edge~$e$, the term~$\mathit{num}(e)$ contributes to the running time in at most one step.
  Therefore, the total contribution of~$\mathit{num}(e)$ over all edges~$e$ is~$\bigO(|V|)$ time.
  Similarly, a vertex is assigned to a vertex of the triangulation at most once throughout the procedure.
  Likewise, for each vertex~$p_k$, the terms~$\mathit{deg}(p_k)$ and~$\mathit{num}(p_k)$ contribute to the running time in at most one step (for totals of respectively~$\bigO(|E|)$ and~$\bigO(|V|)$ time throughout the procedure).
  Therefore, the total procedure takes~$\bigO(|V|+|E|)$ time and we can decide in linear time whether there exists a triangulation-respecting drawing, and output one if so.
\end{proof}

\begin{restatable}{lemma}{lemAlgorithmStep}
    \label{lem:algorithmStep}
    Let~$\mathcal{Q}$ be a set of non-nested pockets and assume that pocket~$Q=Q_{(p_i,p_j)}\in\mathcal{Q}$ does not consist of a single edge.
    Let~$p_m$ with~$i<m<j$ be the third vertex of~$T_Q$, and let~$L=Q_{(p_i,p_m)}$ and~$R=Q_{(p_m,p_j)}$ be subpockets of~$Q$.
    Assume that all pockets in~$\mathcal{Q'}=\{L,R\}\cup\mathcal{Q}\setminus\{Q\}$ have sketches and that~$S'[v]=\E_\mathcal{Q'}(v)\neq\emptyset$ for all~$v$.
    Define
    \[S[v]=\left\{\begin{array}{ll}
      e_Q\cap S'[v] & \text{if~$S'[v]=e_L$ or~$S'[v]=e_R$,}\\
      e_Q\cap S'[v] & \text{if~$S'[v]\not\subseteq L\cup R$ and there is a neighbor~$u$ of~$v$ with~$S'[u]=p_m$,}\\
      S'[v] & \text{otherwise.}
    \end{array}\right.\]
    Then~$Q$ has a sketch and~$S[v]=\E_\mathcal{Q}(v)$ for all~$v$.
  \end{restatable}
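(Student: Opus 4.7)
The plan is to recast the problem in terms of intersections so that $S'$ and the target $\E_{\mathcal{Q}}$ can be compared pointwise, then verify the three cases in the definition of $S[v]$ by a case analysis on the pair $(\E_L(v),\E_R(v))$. Concretely, I would write $A(v)=\bigcap\{\E_{Q''}(v):Q''\in\mathcal{Q}\setminus\{Q\},\ \E_{Q''}(v)\neq T_{Q''}^+\}$ (with the empty intersection equal to $P$), and for $X\in\{L,R,Q\}$ set $A_X(v)=\E_X(v)$ when $\E_X(v)\neq T_X^+$ and $A_X(v)=P$ otherwise. Then $S'[v]=A(v)\cap A_L(v)\cap A_R(v)$ and $\E_{\mathcal{Q}}(v)=A(v)\cap A_Q(v)$, so the task reduces to computing $A_Q(v)$ from $A_L(v)$ and $A_R(v)$ via the definitions of $\L_Q$ and $\E_Q$.

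For the first claim (that $Q$ has a sketch) I would show $\L_Q(v)$ is defined for every $v$: the only way $\L_Q(v)$ is undefined is when $\E_L(v)\cap\E_R(v)=\emptyset$ and neither $\E_L(v)$ nor $\E_R(v)$ equals $T_Q$. In that situation $A_L(v)=\E_L(v)$ and $A_R(v)=\E_R(v)$, so $S'[v]\subseteq A_L(v)\cap A_R(v)=\emptyset$, contradicting the hypothesis. Hence $\L_Q$ and $\E_Q$ are defined, and by Lemma~\ref{lem:sketchIfDefined} $Q$ is sketchable.

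For the main equality I would rely on two structural observations. First, since the other pockets in $\mathcal{Q}\setminus\{Q\}$ are non-nested with $Q$ and no triangulation edge incident to $p_m$ escapes $Q$ (edges of $\T$ do not cross $e_Q$), no simplex assigned by $\E_{Q''}$ for $Q''\in\mathcal{Q}\setminus\{Q\}$ contains $p_m$; hence $A(v)\ni p_m$ iff $A(v)=P$. Second, if $S'[u]=p_m$, then $p_m\in A(u)$ forces $A(u)=P$ by the first observation, so $A_L(u)\cap A_R(u)=\{p_m\}$; a short enumeration of the $\L_Q$ definition then yields $\L_Q(u)=p_m$, and in particular $\L_Q(u)\cap e_Q=\emptyset$. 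Moreover, if $v$ is a push-out candidate (i.e.\ $\E_L(v)=\E_R(v)=T_Q$), the common-triangle requirement forces each neighbor $u$ of $v$ to satisfy $\E_L(u)\in\{T_Q,e_L,p_i,p_m\}$ and $\E_R(u)\in\{T_Q,e_R,p_m,p_j\}$; enumerating these pairs shows $\L_Q(u)\cap e_Q=\emptyset$ occurs only when $\L_Q(u)=p_m$. Thus the neighbor condition $S'[u]=p_m$ exactly detects failure of the push-out rule.

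With these in place I would verify $S[v]=A(v)\cap A_Q(v)$ by running through the possible values of $\L_Q(v)$: when $\L_Q(v)=T_Q$ (push-out regime) the neighbor condition distinguishes $A_Q(v)=P$ (push-out fires) from $A_Q(v)=e_Q$ (it does not), matching the middle case of the formula for $S[v]$; when $\L_Q(v)\in\{e_L,e_R\}$ we fall into the first case of $S[v]$ and $A_Q(v)=\L_Q(v)\cap e_Q$ is the single vertex $\{p_i\}$ or $\{p_j\}$; in the remaining configurations $\L_Q(v)$ already lies inside $L\cup R$ away from $e_Q$, so $A_Q(v)=\L_Q(v)$ agrees with $A_L(v)\cap A_R(v)$ (after replacing any $T_Q$ by $P$), and the ``otherwise'' branch returns $S[v]=S'[v]$ as required. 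The main obstacle will be the bookkeeping in the push-out case: making the neighbor-based detection both sound and complete requires the two structural observations together with the common-triangle restriction, and the sub-cases $A(v)=P$ vs.\ $A(v)\neq P$ must be treated carefully so that the intersection with $e_Q$ always reproduces $A(v)\cap A_Q(v)$.
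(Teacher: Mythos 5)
Your overall strategy is essentially the paper's: show $\L_Q$ (hence $\E_Q$) is defined using $S'[v]\neq\emptyset$, and then verify $S[v]=\E_{\mathcal{Q}}(v)$ by unwinding the three branches, with the crux being that the neighbor test $S'[u]=p_m$ precisely encodes failure of the push-out rule in the definition of $\E_Q$. The $A(v)$/$A_X(v)$ reformulation is a clean piece of bookkeeping that the paper does not introduce explicitly, but it is a repackaging rather than a different route; both proofs rest on the same two facts, namely that non-nestedness prevents $p_m$ from appearing in any other pocket's simplex (so $p_m\in S'[u]$ forces the other pockets to be trivial on $u$), and that for a neighbor $u$ of a push-out candidate the common-triangle requirement confines $\E_L(u)$ and $\E_R(u)$ to simplices of $T_Q$, making $\L_Q(u)=p_m$ the only way to miss $e_Q$.

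Two small wrinkles in your case analysis deserve repair but do not derail the argument. First, the push-out regime $\L_Q(v)=T_Q$ does not always land in the \emph{second} branch of the $S[v]$ formula: if $A(v)$ is a single vertex $p_i$ or $p_j$ (contributed by some other pocket that shares a lid vertex with $Q$), then $S'[v]\subseteq L\cup R$, the \emph{third} branch fires, and one must separately check that $A_Q(v)\in\{e_Q,P\}$ is harmless because it is absorbed by the intersection with $\{p_i\}$ or $\{p_j\}$. Second, in the ``remaining configurations'' it is not true that $\L_Q(v)$ lies \emph{away from} $e_Q$: well-behavedness of $\E_L$ and $\E_R$ forces $\L_Q(v)$ to either miss $e_Q$ entirely or be one of $p_i,p_m,p_j,e_L,e_R$, and when $\L_Q(v)\in\{p_i,p_j\}$ one has $\L_Q(v)\cap e_Q\neq\emptyset$ and needs to check that $\E_Q(v)=\L_Q(v)\cap e_Q=\L_Q(v)$ anyway. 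Once these boundary sub-cases are spelled out the argument closes exactly as the paper's does, including the converse direction $\L_Q(u)=p_m\Rightarrow S'[u]=p_m$ (which you use implicitly): one applies the same enumeration together with $S'[u]\neq\emptyset$ to conclude $A(u)=P$.
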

  
    \begin{proof}
    We first show that~$\E_Q$ is defined.
    If~$\E_Q$ is not defined, then there is some vertex~$v$ with~$\E_L(v)\cap\E_R(v)=\emptyset$ and~$\E_L(v)\neq T^+_L$ and~$\E_R(v)\neq T^+_R$.
    But then~$S'[v]=\emptyset$, contradicting that~$\E_\mathcal{Q'}(v)\neq\emptyset$, so~$\E_Q$ is defined and therefore~$Q$ has a sketch.

    Suppose that~$S'[v]=e_L$.
    Then we claim that~$\E_{Q'}(v)=T^+_{Q'}$ for all~$Q'\in\mathcal{Q'}\setminus\{L\}$.
    Indeed, otherwise~$S'[v]\subseteq Q'$, but because pockets of~$\mathcal{Q'}$ are not nested, we have~$e_L\not\subseteq Q'$, contradicting that~$S'[v]=e_L$.
    So~$\E_L(v)=e_L$ and~$\E_R(v)=T^+_R$, and therefore~$\E_Q(v)=p_i=e_Q\cap S'[v]$.
    Since all pockets~$Q'\in\mathcal{Q}\setminus\{Q\}$ have~$\E_{Q'}(v)=T^+_{Q'}$, we have~$\E_\mathcal{Q}(v)=\E_Q(v)=e_Q\cap S'[v]$.
    Therefore~$S[v]$ is set correctly if~$S'[v]=e_L$ (or~$S'[v]=e_R$ by a symmetric argument).

    If~$S'[v]\not\subseteq L\cup R$, then~$\E_L(v)=T^+_L$ and~$\E_R(v)=T^+_R$, so~$\L_Q(v)=T_Q$.
    If additionally there is a neighbor~$u$ of~$v$ with~$S'[u]=p_m$, then~$\E_Q(v)=e_Q$ and therefore~$\E_\mathcal{Q}(v)=e_Q\cap\E_{\mathcal{Q}\setminus\{Q\}}(v)=e_Q\cap\E_{\mathcal{Q'}\setminus\{L,R\}}(v)=e_Q\cap\E_\mathcal{Q'}(v)=e_Q\cap S'[v]$.
    Because the second case of the definition of~$S[v]$ applies, we correctly assign~$S[v]=e_Q\cap S'[v]$.
    If instead there is no such neighbor~$u$, then~$\E_Q(v)=T^+_Q$, so~$\E_\mathcal{Q}(v)=\E_{\mathcal{Q}\setminus\{Q\}}(v)=\E_{\mathcal{Q'}\setminus\{L,R\}}(v)=\E_\mathcal{Q'}(v)$.
    Because the third case of the definition of~$S[v]$ applies we correctly assign~$S[v]=S'[v]$.

    Finally, assume that~$S'[v]\neq e_L$ and~$S'[v]\neq e_R$ and~$S'[v]\subseteq L\cup R$, so the third case of the definition of~$S[v]$ applies.
    If there exists some pocket~$Q'\in\mathcal{Q'}\setminus\{L,R\}$ with~$\E_{Q'}(v)\neq T^+_{Q'}$, then~$S'[v]=p_i$ or~$S'[v]=p_j$.
    Consider the case where~$S'[v]=p_i$ (the other case is symmetric).
    Then~$p_i\in\E_L(v)$ and~$\E_R(v)=T^+_R$, in which case~$p_i\in\E_Q(v)$ and hence~$\E_\mathcal{Q}(v)=p_i=\E_\mathcal{Q'}(v)$, so~$S[v]$ is correctly set to~$S'[v]$.
    So assume that all pockets~$Q'\in\mathcal{Q'}\setminus\{L,R\}$ have~$\E_{Q'}(v)=T^+_{Q'}$.
    Then~$\E_{\mathcal{Q'}\setminus\{L,R\}}(v)=P$, so~$\E_\mathcal{Q'}(v)=\E_{\{L,R\}}(v)$ and~$\E_\mathcal{Q}(v)=\E_{\{Q\}}(v)$.
    Therefore,~$S[v]$ is correctly set to~$S'[v]$ if~$\E_{\{Q\}}(v)=\E_{\{L,R\}}(v)$.
    Because~$S'[v]\subseteq L\cup R$, we have~$\E_{\{L,R\}}(v)\neq P$ and therefore it follows that~$\E_{\{L,R\}}(v)=\L_Q(v)\subseteq L\cup R$ and in particular~$e_Q\not\subseteq\L_Q(v)$.
    If moreover~$\L_Q(v)\cap e_Q=\emptyset$, we have~$\E_Q(v)=\L_Q(v)=S'[v]\neq T^+_Q$ and we have correctly set~$S[v]=S'[v]$.
    If instead~$\L_Q(v)\cap e_Q\neq\emptyset$, we have~$\E_Q(v)=\L_Q(v)\cap e_Q$.
    In that case, because~$\L_Q(v)\subseteq L\cup R$ and~$\L_Q(v)\neq e_L$ and~$\L_Q(v)\neq e_R$, we have~$\L_Q(v)=p_i$ or~$\L_Q(v)=p_j$, so~$\E_Q(v)=\L_Q(v)\cap e_Q=\L_Q(v)$ and we have correctly set~$S[v]=S'[v]$.
  \end{proof}

\section{Omitted Proofs from Section~\ref{sec:planar}}\label{app:Section4}

\begin{restatable}{lemma}{lemPlanarTriangulate}
\label{lem:planar-triangulate}
    For any plane instance~$(G,C,\emb)$ that has a sketch for triangulation $\T$, there exists a triangulated instance $(G',C,\emb')$ of $(G,C,\emb)$ that also has a sketch for $\T$.
\end{restatable}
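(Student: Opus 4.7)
The plan is to process the non-triangular interior faces of $\emb$ one by one, planarly subdividing each into triangles while preserving the existence of a sketch. For a non-triangular face $f$ of $\emb$ with boundary cycle $(v_1, \ldots, v_k)$ where $k \geq 4$, I would reduce $f$ to smaller faces by applying one of two operations until only triangular faces remain.

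The first operation is \emph{chord addition}: if some non-adjacent pair $(v_i, v_j)$ on the boundary has $\Gamma(v_i)$ and $\Gamma(v_j)$ as simplices of a common triangle of $\T$, add the chord $(v_i, v_j)$ inside $f$. This splits $f$ into two smaller sub-faces, each inheriting a valid sketch from $\Gamma$, and we recurse on both.

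The second operation is \emph{Steiner vertex addition}: if no such pair exists, introduce a Steiner vertex $u$ inside $f$, connected to a suitable subset $S \subseteq \{v_1, \ldots, v_k\}$ of boundary vertices, and assign $\Gamma(u) := \sigma$ for some simplex $\sigma$ of $\T$ such that $\sigma$ and $\Gamma(v)$ share a triangle for every $v \in S$. The resulting sub-faces of $f$ (wedges bounded by $u$ and consecutive elements of $S$ along the cycle) each inherit a valid sketch under this extension of $\Gamma$, so we recurse on the non-triangular ones.

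The hard part will be the second case: establishing that a choice of $\sigma$ and $S$ exists with $|S|$ large enough to strictly decrease a potential such as $\sum_f (|f|-3)$. For this, I would exploit the simply-connectedness of $P$ (and hence of $\T$ as a $2$-complex): the sketched boundary cycle corresponds to a closed walk in the simplex-adjacency structure of $\T$, which is null-homotopic, and a careful combinatorial argument ``contracting'' this walk yields a simplex $\sigma$ sharing a triangle with enough boundary vertices to guarantee progress. Iterating over all non-triangular faces, the procedure terminates and produces the required triangulated instance $(G',C,\emb')$ together with a sketch extending $\Gamma$.
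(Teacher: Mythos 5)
Your chord-addition operation is essentially the engine of the paper's proof, so the general direction is right, but there are two concrete gaps, and the case you flag as ``hard'' never actually arises. For any cycle $u_1,\dots,u_l$ with $l\ge4$ carrying a sketch $\Delta$, a compatible chord always exists: if $\Delta(u_1)$ and $\Delta(u_3)$ do not meet a common triangle of $\T$, then (using that the dual of $\T$ is a tree) some edge $e_Q$ of $\T$ separates them; this forces $\Delta(u_2)$, and also $\Delta(u_i)$ for some $4\le i\le l$, to be simplices of a common triangle incident to $e_Q$, giving a valid non-adjacent chord pair. Your ``null-homotopic walk'' intuition is aimed at exactly this structure, but its conclusion is that chord addition alone suffices on every cycle of length at least four; the Steiner-vertex machinery, the choice of $\sigma$ and $S$, and the unproved progress measure are unneeded.

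The more serious gap is that adding a chord between two non-adjacent boundary vertices $v_i,v_j$ of a face of $\emb$ can create a multi-edge: $v_i$ and $v_j$ may already be joined by an edge of $G$ drawn outside the face, which would violate the simple planar triangulation assumptions that later steps (in particular the contraction of Lemma~\ref{lem:planar-contraction}) rely on. The paper circumvents this with an inner-ring construction: inside each non-triangular face with boundary $(v_1,\dots,v_k)$ it inserts a concentric $k$-cycle of fresh vertices $v'_1,\dots,v'_k$, joined to the boundary by a fan of $2k$ triangles and with $\Gamma(v'_i):=\Gamma(v_i)$, and then only adds chords among these fresh inner vertices, where non-adjacent pairs always have graph distance at least two and so chords never duplicate existing edges. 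You also omit the preliminary step of connecting components of $G$ that do not contain $C$. Without the ring construction and the observation that a compatible chord always exists, the proposal does not establish the lemma.
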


\begin{proof}
    If~$G$ is not connected, we connect it by iteratively applying the following procedure.
    Let~$H$ be the component of~$G$ containing~$C$ and let~$F$ be a face of~$\emb$ whose boundary contains a vertex~$u$ of~$H$ and a vertex of a component~$H'$ of~$G\setminus H$, and connect those vertices by an edge to obtain a new instance~$(G',C,\emb')$ with fewer components, where~$\emb'$ is a copy of~$\emb$ that additionally routes the new edge through~$F$.
    We can transform a sketch of~$G$ into a sketch of~$G'$ by reassigning the vertices of~$H'$ to the simplex that~$u$ is assigned to.

    So assume without loss of generality that~$G$ is connected.
    If~$G$ is not triangulated, we iteratively pick a non-triangular interior face~$F$ of~$\emb$, and consider the cycle~$\{v_1,\dots,v_k\}$ of vertices on the boundary of~$F$.
    We place a copy~$C'=\{v'_1,\dots,v'_k\}$ of this cycle interior to~$F$ and connect~$v_i$ to~$v'_i$ and~$v'_{i+1}$ (where~$v'_{k+1}=v'_1$ and~$v_{k+1}=v_1$), creating~$2k$ triangular faces with vertices~$v_i,v'_i,v'_{i+1}$ and vertices~$v_i,v'_{i+1},v_{i+1}$, and one face with the~$k$ vertices of~$C'$ as boundary.
    The resulting graph has a sketch (simply assign~$v'_i$ to the simplex that~$v_i$ is assigned to), and the restriction of its embedding to~$G$ is~$\emb$.

    We will triangulate~$C'$ as follows by iteratively adding chords to non-triangular faces interior to~$C'$.
    For this, let~$F'$ be a non-triangular face interior to~$C'$ (possibly after adding chords), and let~$C''=\{u_1,\dots,u_l\}$ (with~$l\geq 4$) be the cycle of vertices on its boundary.
    Then the distance (in the graph) between any pair~$u_i$ and~$u_j$ of vertices that are not adjacent on~$C''$ is at least~$2$.
    We will pick a chord of~$C''$ in such a way the resulting graph has a sketch.
    If~$\Delta$ assigns a pair of non-adjacent vertices of~$C''$ to simplices that intersect a common triangle of~$\mathcal{T}$, then we can connect them by an edge to obtain a graph that has a sketch (as well as a planar embedding whose restriction to the original graph is~$\emb$).
    To show that a pair of such non-adjacent vertices exist, consider the vertices~$u_1$ and~$u_3$.
    If~$\Delta(u_1)$ and~$\Delta(u_3)$ intersect a common triangle of~$\mathcal{T}$, then we are done.
    Otherwise, there is an edge~$e_Q$ of the triangulation~$\mathcal{T}$ such that~$\Delta(u_1)$ and~$\Delta(u_3)$ lie on different sides of~$e_Q$.
    But then~$\Delta(u_2)$ must intersect~$e_Q$, and so must~$\Delta(u_i)$ for some vertex~$u_i$ with~$4\leq i\leq l$, in which case~$\Delta(u_2)$ and~$\Delta(u_i)$ both intersect simplices of~$T_Q$.

    This procedure transforms the original planar instance~$(G,C,\emb)$ into a sketchable planar triangulation~$(G',C,\emb')$ of~$(G,C,\emb)$.
  \end{proof}

  \begin{restatable}{lemma}{lemPlanarSepTriangle}
  \label{lem:planar-sep-triangle}
    Suppose that~$(G,C,\emb)$ has a separating triangle and let~$H$ be the subgraph of~$G$ that~$\emb$ draws in the interior of that triangle.
    Then~$G$ has an accommodating drawing for~$(\emb,\T)$ if~$G\setminus H$ has an accommodating drawing for~$(\emb',\T)$ (where~$\emb'$ is the restriction of~$\emb$ to~$G\setminus H$).
  \end{restatable}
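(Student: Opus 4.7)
The plan is to start from an accommodating drawing $\W'$ of $G\setminus H$ with infinitesimal perturbation $\Wt'$, and extend it to $G$ by collapsing the vertices of $H$ onto one vertex $v_1$ of the separating triangle $t=(v_1,v_2,v_3)$, then produce a valid perturbation by drawing $H$ in a tiny straight-line planar way near the perturbed image of $v_1$.

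First I would define $\W$ on $G$ by $\W(u):=\W'(u)$ for $u\notin V(H)$ and $\W(u):=\W'(v_1)$ for every $u\in V(H)$. I would then check that $\W$ is triangulation-respecting: every edge incident to $V(H)$ has at least one endpoint drawn at $\W'(v_1)$, and the other endpoint is either $\W'(v_1)$ itself (a degenerate segment, trivially contained in any triangle of $\T$ incident to $\W'(v_1)$), or $\W'(v_2)$ or $\W'(v_3)$, in which case the segment coincides with the segment drawn by $\W'$ for the edge $(v_1,v_2)$ or $(v_1,v_3)$, which is already triangulation-respecting.

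Next I would construct the perturbation $\Wt$ by setting $\Wt(u):=\Wt'(u)$ outside $V(H)$ and placing the vertices of $V(H)$ in a tiny neighborhood of $\Wt'(v_1)$. Since $\Wt'$ is planar and polygon-respecting and $P$ is a simple polygon without holes, the three points $\Wt'(v_1),\Wt'(v_2),\Wt'(v_3)$ bound a nondegenerate Euclidean triangle $T^*\subseteq P$, and the isotopy between $\Wt'$ and $\emb'$ identifies $T^*$ with the original separating triangle $t$. The restriction of $\emb$ to $H\cup t$ therefore prescribes a planar embedding of $H$ inside $T^*$; I would invoke Tutte's theorem on the convex region $T^*$ to realize this embedding by a straight-line planar drawing of $H$, and then scale that drawing toward $\Wt(v_1)$ by a sufficiently small factor $\varepsilon>0$ so that every vertex of $V(H)$ ends up in an arbitrarily small disk around $\Wt(v_1)$.

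Finally I would verify the three properties of an accommodating drawing. For $\varepsilon$ small enough, every $\Wt(u)$ with $u\in V(H)$ lies inside a simplex of $\T$ that contains $\W'(v_1)=\W(u)$, so $\Wt$ is a valid infinitesimal perturbation of $\W$. Planarity follows because $\Wt'$ is planar and the scaled $H$-drawing fits in a disk around $\Wt(v_1)$ disjoint from the rest of $\Wt'$; polygon-respecting follows from $T^*\subseteq P$; and isotopy with $\emb$ follows by composing the isotopy of $\Wt'$ with $\emb'$ (which sends $T^*$ to $t$) with the isotopy between the straight-line drawing of $H$ inside $T^*$ and the drawing of $H$ inside $t$ prescribed by $\emb$. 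The hardest part will be the isotopy bookkeeping: the new drawing of $H$ must attach to $v_1,v_2,v_3$ in the precise cyclic orders prescribed by $\emb$, which is exactly where the Tutte-style realization on the convex region $T^*$ is needed rather than an ad hoc placement.
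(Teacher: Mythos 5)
Your overall strategy matches the paper's: define $\W$ by collapsing all of $H$ onto one vertex $s$ of the separating triangle, observe that $\W$ remains triangulation-respecting, and then perturb $H$ into the empty triangular face that $\Wt'$ leaves behind. The paper's own proof is essentially this plan in two sentences, asserting without detail that ``$\W$ can perturb $H$ into its interior in such a way that the resulting drawing is planar.''

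Where your write-up goes beyond the paper---the Tutte-plus-scaling construction of the perturbation---is exactly where a real gap appears. You claim planarity ``because the scaled $H$-drawing fits in a disk around $\Wt(v_1)$ disjoint from the rest of $\Wt'$,'' but this ignores the edges from $H$-vertices to $v_2$ and $v_3$: those edges leave the disk and span almost all of $T^*$. When you scale only the $H$-vertices toward $\Wt(v_1)$ while keeping $v_2,v_3$ fixed, the directions of these long edges change, and this can create new crossings with the (now tiny) $H$-edges. A concrete instance: with $T^*=\bigl\{(0,1),(1,0),(-1,0)\bigr\}$, $\phi(a)=(0,0.5)$, $\phi(b)=(-0.1,0.3)$, $\phi(c)=(0.3,0.3)$, the segments $av_2$ and $bc$ are disjoint, yet after scaling toward $v_1$ the segment from $a_\varepsilon=(0,1-0.5\varepsilon)$ to $(1,0)$ crosses $b_\varepsilon c_\varepsilon$ at $x\approx 0.2\varepsilon$, which lies in $[-0.1\varepsilon,0.3\varepsilon]$. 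So ``scale the Tutte drawing'' does not by itself yield planarity. (If instead you scale $v_2,v_3$ along with $H$, the drawing stays planar but no longer places $v_2,v_3$ at $\Wt(v_2),\Wt(v_3)$, and reconnecting the fans to the true positions raises the same issue.) What is actually needed is a more careful placement of the $H$-vertices in the small disk so that the angular order of the long edges around each of $v_1$, $v_2$, $v_3$ matches the rotation system of $\emb$; this is the same style of argument the paper uses explicitly in the proof of Lemma~\ref{lem:planar-contraction} (``extend all erased edges \dots since we can take the neighborhoods arbitrarily small, we can draw each of the redrawn or extended edges as straight lines''). Your conclusion is right and the structure is the paper's, but the scaling step as stated does not close the gap it is meant to close.
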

  
  \begin{proof}
        Consider an accommodating drawing $\W'$ for $(\emb',\T)$.
    Then $\W'$ places the vertices of the separating triangle that contained~$H$ on a common triangle of~$\T$.
    Let $\W$ be a drawing of $(G,C)$, obtained from $\W'$ by placing place all vertices of $H$ at the same location as one vertex of the separating triangle.
    Specifically, let $\W(v)=\W'(v)$ for all vertices $v$ of $G\setminus H$, and $\W(v)=\W'(s)$ for all vertices $v$ of $H$, where $s$ is a vertex of the separating triangle.
    The perturbation of $\W'$ maps the separating triangle to an empty triangle, and $\W$ can perturb $H$ into its interior in such a way that the resulting drawing is planar.
    Therefore, $\W$ is an accommodating drawing for $(\emb,\T)$.
  \end{proof}
  
  \begin{restatable}[$\star$]{lemma}{lemPlanarContract}
    \label{lem:planar-contraction}
    Let~$(G,C,\emb)$ be a sketchable triangulated instance without separating triangles.
    Suppose that contracting some edge $(u,v)\in E$ not on $C$ yields a graph $G'$ with an accommodating drawing $\W'$ for $(\emb',\T)$, where~$\emb'$ is obtained from~$\emb$ by routing all edges that were incident to~$v$ (except those that connect to a neighbor of $u$) parallel to $e$, and connecting them to the contracted vertex, which $\emb'$ draws at the location of $u$ in $\emb$ (see Figure~\ref{fig:contraction}).
    Then~$(G,C)$ has an accommodating drawing $\W$ for $(\emb,\T)$, where $\W(w)=\W'(w)$ for all $w\notin\{u,v\}$, and $\W(u)=\W(v)$ is the location of the vertex resulting from the contraction.
  \end{restatable}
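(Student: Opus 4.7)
The plan is to start with a planar polygon-respecting infinitesimal perturbation of~$\W'$ (which exists since $\W'$ accommodates $(\emb',\T)$), and construct a corresponding perturbation $\Wt$ of $\W$ by \emph{splitting} the image of the contracted vertex into two infinitesimally separated points, one for $u$ and one for~$v$. First, I verify that $\W$ itself is triangulation-respecting: for any edge $(x,y)\in E$ not incident to $\{u,v\}$, $\W$ agrees with $\W'$ on~$x$ and~$y$; for an edge $(u,x)$ with $x\notin\{u,v\}$, the corresponding edge from the contracted vertex to~$x$ is present in~$G'$, so $\W'$ places its endpoints on a common triangle of~$\T$ and hence so does~$\W$; the case of edges incident to~$v$ is symmetric, and the edge $(u,v)$ is trivial since $\W(u)=\W(v)$.

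The core geometric observation is that, since $(G,C,\emb)$ is triangulated and has no separating triangles, $u$ and $v$ share exactly two common neighbors $w_1$ and~$w_2$, namely the apex vertices of the two triangular faces incident to~$(u,v)$ in~$\emb$. Hence the cyclic order of edges around the contracted vertex in~$\emb'$ is exactly the concatenation of the neighbors of~$u$ other than~$v$ and the neighbors of~$v$ other than~$u$, separated in the cyclic order by $w_1$ and~$w_2$. In the chosen perturbation of~$\W'$, the edges to the neighbors of the contracted vertex exit its image $q$ in this same cyclic order. I pick a direction~$\vec{d}$ pointing into the circular sector at~$q$ between the edges to $w_1$ and $w_2$ that contains the neighbors of~$u$, and define $\Wt(u)=q+\epsilon\vec{d}$ and $\Wt(v)=q-\epsilon\vec{d}$ for an infinitesimal $\epsilon>0$, while setting $\Wt(w)$ equal to the perturbation of~$\W'(w)$ for every other vertex~$w$.

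It then remains to verify three properties. For the \emph{infinitesimal perturbation} condition, both $\Wt(u)$ and $\Wt(v)$ lie arbitrarily close to~$q$, which in turn lies in a simplex of~$\T$ containing $\W(u)=\W(v)$; choosing $\epsilon$ small enough places both $\Wt(u)$ and $\Wt(v)$ in such a simplex. For \emph{planarity}, the drawing~$\Wt$ agrees with the given perturbation of~$\W'$ outside an infinitesimal neighborhood of~$q$, so any new crossings must be local; but by the choice of~$\vec{d}$, the edges from~$u$ to its non-$v$ neighbors exit into the $u$-sector, those from~$v$ to its non-$u$ neighbors exit into the $v$-sector, and the segments $(u,v)$, $(u,w_1)$, $(v,w_1)$ and symmetrically $(u,w_2)$, $(v,w_2)$ bound two infinitesimal triangular faces corresponding to the two faces of~$\emb$ incident to~$(u,v)$. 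For \emph{isotopy to~$\emb$}, the combinatorial embedding produced at~$u$ and~$v$ by the split is by construction the one prescribed by~$\emb$, while the embedding elsewhere is unchanged from the perturbation of~$\W'$, which is isotopic to~$\emb'$.

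The main obstacle is the last step: ensuring that a single split direction~$\vec{d}$ is consistent with the combinatorial embedding of~$\emb$. This is where the no-separating-triangles hypothesis is indispensable, because it forces the neighbors of~$u$ other than~$v$ and those of~$v$ other than~$u$ to occupy disjoint cyclic arcs around the contracted vertex, separated exactly by $w_1$ and~$w_2$, so that a valid sector for~$\vec{d}$ exists. Without this assumption, additional common neighbors would interleave the two arcs and no consistent split direction could be chosen.
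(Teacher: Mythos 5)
Your proposal is correct and takes essentially the same approach as the paper: both proofs perturb $\W'$ to a planar polygon-respecting drawing $\Wt'$ and then perform a vertex split of the contracted vertex $z$ into $u$ and $v$ inside an infinitesimal neighborhood of its image, using the no-separating-triangles hypothesis to ensure that the rotation around $z$ partitions cleanly into a $u$-arc and a $v$-arc separated by the two common neighbors $w_1,w_2$. One small caveat in your construction: you should require not merely that $\vec d$ points into the $u$-sector but also that $-\vec d$ points into the $v$-sector (such a direction always exists, since a proper arc and the antipodal image of its complementary arc must intersect), as otherwise $q-\epsilon\vec d$ could still lie in the $u$-sector and the straight-line split could fail to be locally planar.
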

    \begin{figure}[h]\centering
      \includegraphics{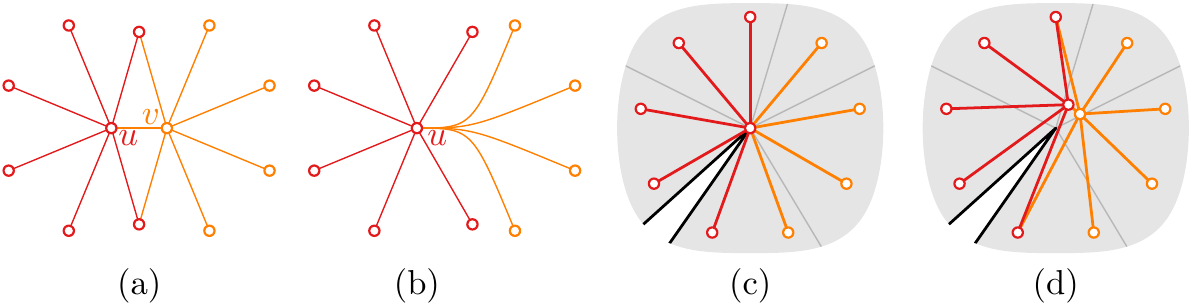}
      \caption{(a) $\emb$ drawn for a neighborhood of~$(u,v)$. (b) $\emb'$ drawn for a neighborhood of the contraction. (c) An accommodating drawing $\W'$ for~$(\emb',\T)$. (d) A perturbation of $\W$.}
      \label{fig:contraction}
    \end{figure}
  \begin{proof}
    It is well-known that contracting an edge $e$ of a triangulation yields a triangulation if and only if $e$ is not part of a separating triangle.
    Hence, $(G',C,\emb')$ is a triangulated instance.
    Denote by $z$ the vertex that is obtained by contracting $(u,v)$, and let $x,y$ be the two vertices of $G$ that (together with $u$ and $v$) form the triangles incident to $(u,v)$ in $G$.
    To see that $\W$ is an accommodating drawing of $(G,C)$ for $(\emb,\T)$, consider the planar polygon-respecting drawing $\Wt'$ obtained by perturbing $\W'$.
    We can obtain a drawing isotopic to $\emb$ as follows: erase an arbitrarily small neighborhood of the vertex $z$ in $\Wt'$ and place vertices $u$ and $v$ in this neighborhood, connected by the edge $(u,v)$.
    Inside the erased neighborhood, extend all erased edges to the vertex $u$ or $v$, depending on which they correspond to.
    Here, we take special care of the edges $(z,x)$ and $(z,y)$, because $(u,x)$, $(v,x)$, $(u,y)$, and $(v,y)$ all are edges of $G$.
    For this, we place two copies of $(z,x)$ and $(z,y)$ side-by-side, which is possible without violating planarity.
    Finally, observe that since we can take the neighborhoods arbitrarily small, we can draw each of the redrawn or extended edges as straight lines.
    The result is a planar polygon-respecting drawing isotopic to $\emb$, and such a drawing can be obtained from $\W$ by arbitrarily small perturbation (by using a sufficiently small neighborhood).
  \end{proof}

    \thmPlanarSketchable*
    \begin{proof}
    Since any accommodating drawing can be transformed into a sketch, there is no accommodating drawing if~$(G,C)$ has no sketch.
    So assume that~$(G,C)$ does have a sketch.
    Lemmas~\ref{lem:planar-triangulate}, \ref{lem:planar-sep-triangle} and~\ref{lem:planar-contraction} show how to transform~$(G,C,\emb)$ into a minimal instance that has a sketch.
    Moreover, these lemmas show that~$(G,C,\emb)$ has an accommodating drawing if the transformed instance~$(G',C,\emb')$ has one.
    Since the transformed instance is minimal and has a sketch, it indeed has an accommodating drawing by Lemma~\ref{lem:planar-minimal-drawing}.
    So~$(G,C,\emb)$ has an accommodating drawing if and only if~$(G,C)$ has a sketch.
  \end{proof}

\section{Omitted Proofs from Section~\ref{sec:universality}}

\lemDoublePulled*
  \begin{proof}
    We prove this by induction on the size of~$Q$.
    Because~$v$ is pulled due to the second case of Definition~\ref{def:pulled},~$Q$ does not consist of a single edge.
    Let~$p_m$ (with~$i<m<j$) be the third vertex of~$T_Q$ and consider the subpockets~$L=Q_{(p_i,p_m)}$ and~$R=Q_{(p_m,p_j)}$ of~$Q$.
    If~$v$ is pulled by a subpocket of~$Q$, then~$v$ is pulled by~$Q$ due to the first case of Definition~\ref{def:pulled}.
    So assume that~$v$ is not pulled by any subpocket of~$Q$.
    The proof of Lemma~\ref{lem:pulledTriangle} tells us that there is a neighbor~$u$ of~$v$ such that either (1)~$u$ is pulled by both~$L$ and~$R$ due to the first case of Definition~\ref{def:pulled}, or (2)~$u$ is pulled by some pocket~$Q_{(p_a,p_b)}$ with~$i<a<b<j$.

    First consider case (1).
    If~$d_G(u,c_m)=0$, then~$d_G(v,p_m)\leq\min(m-i,j-m)$, contradicting that~$v$ is not pulled due to the first case of Definition~\ref{def:pulled}.
    So~$d_G(u,c_m)>0$, and it follows from Definition~\ref{def:pulled} that there exist~$k$ and~$l$ with~$i\leq k<m<l\leq j$ such that~$d_G(u,c_k)\leq\min(k-i,m-k)\leq\min(k-i+1,j-k-1)-d_G(u,v)$ and~$d_G(u,c_l)\leq\min(m-i,j-l)\leq\min(l-i-1,j-l+1)-d_G(u,v)$.
    This completes the proof for case (1).

    If in case (2)~$u$ is pulled by~$Q_{(p_a,p_b)}$ due to the first case of Definition~\ref{def:pulled}, then~$v$ is pulled by~$Q$ due to the first case of Definition~\ref{def:pulled}.
    Hence~$u$ is pulled by~$Q_{(p_a,p_b)}$, but not by the first case of Definition~\ref{def:pulled}.
    So by induction, there exist~$k,o,l\in[a,b]$ such that~$k<o<l$ and a vertex~$x$ with~$d_G(x,c_k)\leq\min(k-a,o-k)\leq\min(k-a+1,b-k-1)-d_G(x,u)$ and~$d_G(x,c_l)\leq\min(l-o,b-l)\leq\min(l-a-1,b-l+1)-d_G(x,u)$.
    Thus, since~$d_G(u,v)\leq 1$ and~$i<a\leq k<o<l\leq b<j$ we have~$d_G(x,c_k)\leq\min(k-i,o-k)\leq\min(k-i+1,j-k-1)-d_G(x,v)$ and~$d_G(x,c_l)\leq\min(l-o,j-l)\leq\min(l-i-1,j-l+1)-d_G(x,v)$.
    Because~$x$ is pulled due to the first case of Definition~\ref{def:pulled} we have~$x\neq v$, completing the proof.
  \end{proof}

\lemArcPocketSketch*
  \begin{proof}
    We prove this by induction on the size of~$Q$.
    If~$Q$ consists of a single edge, then there always exists a sketch and we are done.
    So assume that~$Q$ is not a single edge and let~$p_m$ (with~$i<m<j$) be the third vertex of~$T_Q$, the other two being~$p_i$ and~$p_j$.
    If~$Q$ has no sketch, then by Corollary~\ref{cor:definedPocket} either~$L=Q_{(p_i,p_m)}$ or~$R=Q_{(p_m,p_j)}$ has no sketch, or there exists some vertex~$v$ such that~$\E_L(v)\cap\E_R(v)=\emptyset$ and~$T_Q\not\subseteq\E_L(v)\cup\E_R(v)$.
    If~$L$ or~$R$ has no sketch, then by induction there exist~$k$ and~$l$ such that~$i\leq k\leq l\leq m\leq j$ or~$i\leq m\leq k\leq l\leq j$ and~$d_G(c_k,c_l)<d_C(c_k,c_l)$, in which case we are done.
    So assume that~$L$ and~$R$ both have a sketch, and hence that~$\E_L$ and~$\E_R$ are defined.

    Since~$Q$ has no sketch, there exists some vertex~$v$ such that~$\E_L(v)\cap\E_R(v)=\emptyset$ and~$T_Q\not\subseteq\E_L(v)\cup\E_R(v)$.
    Because~$T_Q\not\subseteq\E_L(v)\cup\E_R(v)$, Lemma~\ref{lem:pulledTriangle} states that~$v$ is pulled by both~$L$ and~$R$.
    Moreover, as~$\E_L(v)\cap\E_R(v)=\emptyset$, we have~$p_m\notin\E_L(v)$ or~$p_m\notin\E_R(v)$, so Lemma~\ref{lem:pulledPoint} applies to~$L$ or~$R$.
    We consider only the case where~$p_m\notin\E_L(v)$, the other case is symmetric.

    We show that there exists some~$k$ with~$i\leq k<m$ such that (1)~$d_G(v,c_k)\leq\min(m-k-1,k-i)$ or (2)~$d_G(v,c_k)\leq\min(m-k-2,k-i+1)$.
    Since~$p_m\notin\E_L(v)$, Lemma~\ref{lem:pulledPoint} states that~$v=c_{m-1}$ or there exists a triangle in~$L$ with vertices~$p_a,p_b,p_m$ and~$i\leq a<b<m$ such that~$v$ is pulled by~$Q_{(p_a,p_b)}$.
    If~$v=c_{m-1}$, then case~(1) applies because for~$k=m-1$ we have~$d_G(v,c_k)=0=m-k-1$ and since~$m-1\geq i$ also~$d_G(v,c_k)\leq k-i$.
    Otherwise, there exists some triangle with vertices~$p_a,p_b,p_m$ and~$i\leq a<b<m$ such that~$v$ is pulled by~$Q_{(p_a,p_b)}$.
    Definition~\ref{def:pulled} applied to~$Q_{(p_a,p_b)}$ tells us that there exists some~$k$ with~$i\leq a\leq k\leq b<m$ such that either~$d_G(v,c_k)\leq k-a\leq k-i$ and~$d_G(v,c_k)\leq b-k\leq m-k-1$ (so case~(1) applies), or~$d_G(v,c_k)\leq k-a+1\leq k-i+1$ and~$d_G(v,c_k)\leq b-k-1\leq m-k-2$ (so case~(2) applies).
    So case~(1) or case~(2) applies.

    Applying Definition~\ref{def:pulled} to~$R$, we also have either (a)~$k'$ with~$m\leq k'\leq j$ such that~$d_G(v,c_{k'})\leq\min(k'-m,j-k')$ or (b)~$k',l'\in[m,j]$ such that~$d_G(v,c_{k'})\leq\min(k'-m+1,j-k'-1)$ and~$d_G(v,c_{l'})\leq\min(l'-m-1,j-l'+1)$.

    Since~$Q$ is a pocket, there is at least one vertex outside~$Q$, so we have~$j-i\leq n-2$.
    For any~$i\leq k\leq l\leq j$, we have~$d_C(c_k,c_l)=\min(l-k,k+n-l)$.
    We show that there exist~$i\leq k\leq l\leq j$ such that~$d_G(c_k,c_l)<d_C(c_k,c_l)$ using the following cases.
    \begin{description}
      \item[(1)(a)]~$d_G(c_k,c_{k'})\leq d_G(v,c_k)+d_G(v,c_{k'})\leq\min(m-k-1,k-i)  +\min(k'-m,j-k')    \leq\min(m-k-1+k'-m,k-i+j-k')  \leq\min(k'-k-1,k+n-2-k')<\min(k'-k,k+n-k')$.
      \item[(1)(b)]~$d_G(c_k,c_{l'})\leq d_G(v,c_k)+d_G(v,c_{l'})\leq\min(m-k-1,k-i)  +\min(l'-m-1,j-l'+1)\leq\min(m-k-2+l'-m,k-i+j-l'+1)\leq\min(l'-k-2,k+n-1-l')<\min(l'-k,k+n-l')$.
      \item[(2)(a)]~$d_G(c_k,c_{k'})\leq d_G(v,c_k)+d_G(v,c_{k'})\leq\min(m-k-2,k-i+1)+\min(k'-m,j-k')    \leq\min(m-k-2+k'-m,k-i+1+j-k')\leq\min(k'-k-2,k+n-1-k')<\min(k'-k,k+n-k')$.
      \item[(2)(b)]~$d_G(c_k,c_{k'})\leq d_G(v,c_k)+d_G(v,c_{k'})\leq\min(m-k-2,k-i+1)+\min(k'-m+1,j-k'-1)\leq\min(m-k-1+k'-m,k-i+j-k')  \leq\min(k'-k-1,k+n-2-k')<\min(k'-k,k+n-k')$.
    \end{description}
    In each case, there exist~$i\leq k\leq l\leq j$ with~$d_G(c_k,c_l)<\min(l-k,k+n-l)=d_C(c_k,c_l)$.
  \end{proof}

\thmPolygonUniversal*

  \begin{proof}
    Let~$\Tr$ be an ear of the triangulation with vertices~$p_i,p_j,p_t$ and~$i\leq j=i+n-2\leq t=i+n-1$.
    Define the pockets~$A=Q_{(p_i,p_j)}$,~$B=Q_{(p_j,p_t)}$ and~$C=Q_{(p_t,p_i)}$.
    Suppose for a contradiction that the triangulation does not have a sketch but the \ref{cond:banana} and \ref{cond:ninja} Conditions are both satisfied.
    By Corollary~\ref{cor:bananaPocketSketch}, each of~$A$,~$B$ and~$C$ have a sketch.
    Moreover by Corollary~\ref{cor:definedTriangulation}~$\Delta$ is not defined.
    So there is some vertex~$v$ such that~$\E_A(v)\cap\E_B(v)\cap\E_C(v)=\emptyset$ and~$\Tr$ is contained in at most one of~$\E_A(v)$,~$\E_B(v)$, and~$\E_C(v)$.
    We have~$v\in\{c_i,c_j,c_t\}$ because otherwise both~$\E_B(v)$ and~$\E_C(v)$ contain~$\Tr$.
    We also have~$\E_A(c_i)=\E_C(c_i)=p_i$ and~$\E_B(c_i)=\Tr$, so~$p_i\in\E_A(c_i)\cap\E_B(c_i)\cap\E_C(c_i)\neq\emptyset$.
    Similarly~$p_j\in\E_A(c_j)\cap\E_B(c_j)\cap\E_C(c_j)\neq\emptyset$.
    So~$v=c_t$ and since~$\E_B(c_t)=\E_C(c_t)=p_t$, at most~$\E_A(c_t)$ contains~$\Tr$.
    But~$\E_A(c_t)\not\subseteq\Tr$, since otherwise~$p_t\in\E_A(c_i)\cap\E_B(c_i)\cap\E_C(c_i)\neq\emptyset$.
    So~$c_t$ is pulled by~$A$.

    Suppose that~$c_t$ is pulled by~$A$ due to the first case of Definition~\ref{def:pulled}.
    Then there is some~$k\in[i,j]$ with~$d_G(c_t,c_k)\leq\min(k-i,j-k)<d_C(c_t,c_k)$, violating the \ref{cond:banana} Condition.
    Similarly, the \ref{cond:banana} Condition is violated if~$c_t$ is additionally pulled by a subpocket~$Q_{(p_a,p_b)}$ of~$A$ with~$i\leq a<b\leq j$ (and~$b-a<j-i$).
    Therefore Lemma~\ref{lem:doublePulled} states that there exists some vertex~$x\neq c_t$ and~$i\leq k<o<l\leq j$ such that~$d_G(x,c_k)\leq\min(k-i,o-k)$ and~$d_G(x,c_l)\leq\min(l-o,j-l)$ and~$d_G(c_k,c_t)\leq d_G(x,c_k)+d_G(x,c_t)\leq\min(k-i+1,j-k-1)$ and~$d_G(c_l,c_t)\leq d_G(x,c_l)+d_G(x,c_t)\leq\min(l-i-1,j-l+1)$.
    The \ref{cond:banana} Condition is violated if~$k\geq t-n/2$ or~$l\leq t-n/2$, so~$k<t-n/2<l$.
    Therefore,~$d_C(c_k,c_t)=k-i+1$ and~$d_C(c_t,c_l)=j-l+1$.
    Since~$k-i+1\leq d_C(c_k,c_l)\leq d_G(c_k,c_t)\leq d_G(x,c_k)+d_G(x,c_t)\leq k-i+1$, we have~$d_G(c_k,c_t)=d_G(x,c_k)+d_G(x,c_t)=k-i+1=d_C(c_k,c_t)$ and similarly~$d_G(c_t,c_l)=d_G(x,c_l)+d_G(x,c_t)=j-l+1=d_C(c_t,c_l)$.

    If~$l-k\geq n/2$, then we have~$d_C(c_k,c_l)=d_C(c_k,c_t)+d_C(c_t,c_l)=d_G(c_k,c_t)+d_G(c_t,c_l)=d_G(x,c_k)+d_G(x,c_t)+d_G(x,c_l)+d_G(x,c_t)>d_G(c_k,c_l)$, violating the \ref{cond:banana} Condition, so~$l-k<n/2$ and hence~$d_C(c_k,c_l)=l-k$.
    We have~$l-k\leq d_C(c_k,c_l)\leq d_G(c_k,c_l)\leq d_G(c_k,x)+d_G(x,c_l)\leq o-k+l-o\leq l-k$, so~$d_C(c_k,c_l)=d_G(c_k,c_l)=d_G(c_k,x)+d_G(x,c_l)$.
    We get~$2(d_G(c_k,x)+d_G(c_l,x)+d_G(c_t,x))=d_G(c_k,c_l)+d_G(c_l,c_t)+d_G(c_t,c_k)\leq d_C(c_k,c_l)+d_C(c_l,c_t)+d_C(c_t,c_k)=n$ for distinct~$k$,~$l$, and~$t$, which violates the \ref{cond:ninja} Condition.  It follows that if a triangulation does not have a sketch, the \ref{cond:banana} or \ref{cond:ninja} Condition is violated.
  \end{proof}

\end{document}